\theoremstyle{plain}
\newtheorem{theorem}{Theorem}[section]
\newtheorem{lemma}[theorem]{Lemma}
\newtheorem{corollary}[theorem]{Corollary}
\newtheorem{proposition}[theorem]{Proposition}
\theoremstyle{plain}
\theoremstyle{remark}
\newtheorem{example}{Example}
\newtheorem*{remark}{Remark}
\newcommand{\norm}[1]{\left\lVert#1\right\rVert}
\newcommand{\Pb}{\mathbb{P}}
\newcommand*{\N}{\mathbb N}
\newcommand*{\R}{\mathbb R}
\newcommand*{\HH}{\mathcal{H}}
\newcommand*{\M}{\mathcal{M}}
\newcommand*{\I}{\mathcal{I}}
\newcommand*{\E}{\mathbb E}
\newcommand{\BB}[1]{{\color{black}{#1}}}
\begin{document}

\title{Multivariate Representations of Univariate Marked Hawkes Processes}
\author{Louis Davis$^{1,2}$\hspace{.2cm},
    Conor Kresin$^1$,
   Boris Baeumer$^1$,    
    Ting Wang$^1$\\
    $^1$Department of Mathematics and Statistics, University of Otago,\\
    P.O. Box 56, Dunedin 9054, New Zealand\\
    $^2$E-mail: \url{ldavis2@stanford.edu}}

  \maketitle
\begin{abstract}
Univariate marked Hawkes processes are used to model a range of real-world phenomena including earthquake aftershock sequences, contagious disease spread, content diffusion on social media platforms, and order book dynamics. This paper illustrates a fundamental connection between univariate marked Hawkes processes and multivariate Hawkes processes. Exploiting this connection renders a framework that can be built upon for expressive and flexible inference on diverse data. Specifically, multivariate unmarked Hawkes representations are introduced as a tool to parameterize univariate marked Hawkes processes. We show that such multivariate representations can asymptotically approximate a large class of univariate marked Hawkes processes, are stationary given the approximated process is stationary, and that resultant conditional intensity parameters are identifiable. A simulation study demonstrates the efficacy of this approach, and provides heuristic bounds for error induced by the relatively larger parameter space of multivariate Hawkes processes.
 
\end{abstract}
\textit{Keywords}: Marked point process; Multivariate Hawkes Process; Non-separable conditional intensity

\section{Introduction} The Hawkes point process model \citep{Hawkes1971} is widely used to model branching processes of interest ranging from contagious disease spread \cite{rizoiu2018sir,schoenberg2019recursive} and retaliatory crime \cite{mohler2011self} to earthquake aftershock sequences \cite{Ogata1988}, neural spike train data \cite{bonnet2023inference}, popularity of content on social media \cite{rizoiu2017expecting}, and high-frequency trading order book characteristics \cite{bacry2015hawkes}. Hawkes models and their variants are often employed to capture temporal clustering between points, roughly serving as an analog to autoregressive models for time series data that does not suffer from the modifiable areal unit problem \cite{daley2003introduction,fotheringham1991modifiable}. Hawkes models can accommodate marked point process data, which includes spatiotemporal data and event stream data as two categories of particular interest \cite{daley2003introduction,embrechts2018hawkes}, as well as multivariate data wherein different types of points can mutually excite or inhibit \cite{bremaud1996stability,liniger2009multivariate}. 

First order conditional intensity specifications of univariate marked Hawkes processes often assume mark-separability, an analytically and computationally necessary model mis-specification. For instance, the ground intensity and magnitude distribution of earthquake aftershocks are modelled as separable using the Epidemic Type Aftershock Sequence (ETAS) model \cite{Ogata1988} despite the fact that the ground intensity and magnitude distribution are postulated to be non-separable \cite{davis2024multidimensional, schoenberg2003multidimensional,Wang2012}. Although non-separable marked Hawkes processes provide a very general modelling framework, they are rarely implemented, as model specification is frequently challenging in practice \cite{zhuang2020detection}.

Parameters for conditional intensity functions governing point processes are often estimated via the method of maximum likelihood (MLE), as the resultant estimates are consistent and asymptotically normal \cite{ogata1978asymptotic}. The properties of MLE estimates of univariate Hawkes processes are discussed in detail in \cite{ozaki1979maximum}, and are often conjectured to hold for marked univariate and multivariate Hawkes processes \cite{bowsher2007modelling}. Reliance on MLE estimates increases the practical difficulty of fitting marked univariate Hawkes models as convergence to parameter estimates is slow \cite{veen2008estimation} and the computational expense of likelihood calculation is high \cite{reinhart2018review}. Less computationally costly estimation methods such as expectation maximization (EM) \cite{veen2008estimation}, spectral estimation \cite{cheysson2022spectral}, and Takacs Fiksel estimation \cite{kresin2023parametric} are available, but are relatively unstudied and infrequently implemented for estimation of marked Hawkes processes.

Precedent work demonstrating that a non-separable univariate marked Hawkes process can be expressed as a multivariate marked Hawkes process has shown promising results \cite{davis2024multidimensional}. However, the difficulty of computation and model specification of fitting multivariate \textit{marked} Hawkes processes is exacerbated relative to univariate marked Hawkes processes. In the context of simulation of multivariate Hawkes processes via thinning, the foundational connection between multivariate and univariate Hawkes processes is  well developed, see for instance Proposition 1 of \cite{ogata1981lewis}. Further, the more general theoretical framework of predictable projection of multivariate point processes is detailed in \cite{jacod1975multivariate}.

We propose partitioning the support of the mark distribution of a univariate marked Hawkes process and then modelling the process as an unmarked multivariate Hawkes process such that each component intensity corresponds to a distinct mark range. Our approach allows for both seperable and non-separable marked Hawkes processes to be captured flexibly via cross excitation in the multivariate Hawkes representation. The granularity of cross-excitation with respect to the mark distribution can be specified via the selected partitioning of the mark space, allowing for the true mark distribution to be unknown and free of modelling assumptions. Further, our approach retains the benefits of modelling the univariate marked process parametrically as the estimated parameters are interpretable. 

Two existing methods utilise multivariate Hawkes representations of non-separable univariate marked Hawkes processes. The first existing approach is a class of Hawkes-like models, here termed \textit{Neural Hawkes}. Neural Hawkes models are neural networks built on architectures ranging from continuous long short-term memory (cLSTM) \cite{mei2017neural} to transformer \cite{zuo2020transformer}. Such models are commonly applied in the case of event stream data; such data is naturally modelled as a non-separable  univariate marked Hawkes process with a discrete mark space (wherein each mark is a \textit{type} of point). Neural Hawkes models are flexible unmarked multivariate models that can parameterize cross excitement, model nonstationarity, nonlinearity, and even negative background rates. However, such models do not yield interpretable parameters and the finite sample and asymptotic statistical properties of Neural Hawkes estimates are yet to be developed \cite{schafer2006recurrent,shchur2021neural}. In contrast, our method relies on the well-trodden theory surrounding multivariate Hawkes processes \cite{liniger2009multivariate}. 

A second approach, closer to our proposed method, partitions the mark space of a (possibly non-separable) univariate marked Hawkes process whereupon the second order characteristics of the resultant multivariate Hawkes representation are shown to be the solution to a system of Fredholm integrals estimable via the Wiener-Hopf method \cite{Bacry2016First}. This methodology has been applied to order book data \cite{Rambaldi2017}, and allows for both exciting and inhibitory effects non-parametrically. Unlike our approach, this methodology relies on completely nonparametric estimation of the kernel matrix, and like the cLSTM, results in non-interpretable estimates. Further, the multivariate Hawkes representation used in our method is shown to approximate a univariate marked Hawkes process arbitrarily well; both above-mentioned existing methods fail to provide any such guarantee.

\subsection{Our contribution}
We show that for any given realization, there exists a multivariate unmarked Hawkes process representation that can approximate the conditional intensity function of a univariate marked Hawkes process so that the $L^1$ difference between the two decreases as the number of components increases. Furthermore, we demonstrate that such a multivariate representation is stationary given stationarity of the targeted univariate marked Hawkes process. Finally, we demonstrate that the parameters governing the multivariate Hawkes representation are identifiable. This method excels relative to other existing options with respect to the cases where (1) the true mark distribution is unknown or non-stationary, and (2) if the mark distribution and ground intensity of the univariate marked process are non-separable or require parametric model specifications that are computationally intractable. Cross excitement parameterized by a multivariate Hawkes process allows for the ground intensity and mark distribution interaction present in a marked univariate Hawkes process to be flexibly modelled in a computationally tractable setting. These results are demonstrated in a very general setting wherein we only assume the existence of a positive mark density, continuity of mark-variable productivity function, and integrability of the positive kernel function.

The remainder of the paper is structured as follows: Section \ref{sec:notation} contains necessary notation and preliminaries, Section \ref{sec:results} contains the above-described results, Section \ref{Sec:Simulation} contains a simulation study validating the results described in Section \ref{sec:results}, and Section \ref{sec:conclusion} concludes and briefly discusses future work.

\section{Introduction to Point Processes and Hawkes Processes}\label{sec:notation}
A point process \BB{$N$} is a measurable mapping from a filtered probability space $(\Omega,\mathcal{H},\mathcal{P})$ onto $\mathcal{N}$, the set of $\mathbb{Z}^+$-valued random measures (counting measures) on a complete separable metric space (CSMS) \cite{daley2003introduction}. We will restrict our attention to point processes that are boundedly finite, \textit{i.e.} processes having only a finite number of points inside any bounded set.

Such a point process is assumed to be adapted to the filtration $\{\mathcal{H}_t\}_{t\geq 0}$ containing all information on the process $N$ at all locations and all times up to and including time $t$. A process is $\mathcal{H}$-predictable if it is adapted to the filtration generated by the left continuous processes $\mathcal{H}_{t}$.  Intuitively, $\mathcal{H}_{t}$ represents the history of a process up to, but not including time $t$. A rigorous definition of $\mathcal{H}_{t}$ can be found in \cite{daley2008introduction}. For a Borel set $X$, we say that $N(X)$ is the number of the points in the set $X$, and $N(T)\vcentcolon=N\left([0,T)\right)$ is the number of points that have arrived by time $T$. Assuming it exists, the $\mathcal{H}$-{\sl conditional intensity} $\lambda$ of $N$ is an integrable, non-negative, $\mathcal{H}$-predictable process
such that  
\begin{equation}\label{eqn:condintensity}
    \lambda(t,M|\HH_t) = \lim_{h, \delta \downarrow 0} \frac{\E[N\left([t,t+h) \times \mathbb{B}(M,\delta)\right) | \mathcal{H}_{t}]}{h \mu\left( \mathbb{B}(M,\delta)\right)}.
\end{equation}
where $\mathbb{B}(M,\delta)$ is a ball centered at $M$ with radius $\delta$, $\mu$ is the measure on $\M$ and $\mathcal{H}_{t}$ represents the history of the process $N$ up to but not including time $t$.

A point process is {\sl simple} if with probability one, all the points are distinct. Since the conditional intensity $\lambda$ uniquely determines the finite-dimensional distributions of any simple point process 
(Proposition 7.2.IV of \cite{daley2003introduction}), one typically models a simple point process by specifying a model for $\lambda$. A point process is {\sl stationary} if  for all bounded Borel subsets of the real line $\mathcal{B}_r$ and times $t\in \R$, $\{N(\mathcal{B}_r+t)\}_{r=1}^\infty \perp t$, which is to say the joint distribution of the specified model has a structure which is invariant over shifts in time.

A multivariate Hawkes process \cite{Hawkes1971} is a simple point process $\{N_i(t)\}_{i=1}^k$ or equivalently $\bm N(t)$, which can be represented as a branching process. The process is defined as the sum of the $K$ component intensities, each expressed as a Stieltjes integral   
\begin{equation}
    \lambda^{(i)}(t|\mathcal{H}_t)=\lambda_{0i}+\sum_{j=1}^K\int_{0}^t g_{ij}(t-u)dN_j(u).
\end{equation}
When $K=1$, such a process is referred to as a univariate Hawkes process. The properties of general multivariate processes \cite{bremaud1981point} and nonlinear Hawkes processes \cite{bremaud1996stability} are well developed, but the scope of this paper is restricted to linear Hawkes processes. Such processes can be viewed as Poisson cluster processes made up of immigrant events of type $i$, arriving as a Poisson process with rate $\lambda_{0i}$ and offspring events corresponding to each immigrant event given a history (e.g. \cite{hawkes1973cluster}). Offspring events are those excited by previous arrivals in the history of the process, and we say that event $t_i$ is a first generation offspring of event $t_j$, if $t_j$ is the parent of $t_i$. Hence, Hawkes processes are stationary if $\rho\left(\int_0^\infty \bm g(u) d\bm N(u)\right)<1$ where $\rho(\cdot)$ denotes the spectral radius. In this case, the mean cluster size is finite (Lemma 6.3.II of \cite{daley2003introduction}). 

The conditional intensity function of a marked univariate Hawkes process takes the parametric form 
\begin{equation}\label{eqn:markedunivariate}
    \lambda(t,M|\HH_t)=\lambda_0(M)+\int_{0}^t g(t-u,M)dN(u).
\end{equation}
If the factorization $g(t-u,M)=f(M)g^\prime(t-u)$, $\lambda_0(M)=\Lambda f(M)$ exists, where $f(\cdot)$ is the mark density, $g^\prime(\cdot)$ is the kernel function, and $\Lambda$ is the Poisson immigrant rate, then the process characterized in Equation \eqref{eqn:markedunivariate} is mark separable.

Within this article we are interested in the properties of a multivariate Hawkes process on the space $[0,T] \times \mathcal{M}$ where $\M$ is a compact CSMS equipped with metric $d$ and measure $\mu$ such that $\mu(\M)<\infty$. When comparing our multivariate process to a marked univariate process we use the same history for each process. Therefore, our convergence theorems can be viewed as pointwise deterministic. It is likely that our process converges to marked univariate Hawkes processes in the Skorokhod topology, however this is outside of the scope of the paper and not pertinent for practical purposes. 

We now introduce our multivariate Hawkes representation which will be used to approximate a non-separable univariate marked Hawkes process. Specifically, we define the conditional intensity function, dependent on the parameter vector $\theta$ to be
\begin{equation}\label{eq:generalmodel}
    \lambda_\theta(t,M|\HH_t)=\sum_{i=1}^K \chi_{A_i}(M)\left(\lambda_{0i}+\sum_{j=1}^K\sum_{t_{\ell,j}:t_{\ell,j}<t}\alpha_{ij}g_{ij}(t-t_{\ell,j};\beta_{ij})\right)
\end{equation}
where $\chi_{A}(M)$ is the indicator function for the measurable set $A\subset \M$, $\lambda_{0i}$ is the background rate parameter, $t_{\ell,j}$ is the $\ell^{th}$ event of type $j$ (i.e. $M_{\ell,j} \in A_j$) and $g_{ij}(t;\beta_{ij})$ is a positive density dependent on only one parameter $\beta_{ij}$, which we now shorten to $g_{ij}(t)$ to ease notation. For any fixed $K$ the parameter vector is $\theta=\{\{\lambda_{0i}\}, \{\alpha_{ij}\},\{\beta_{ij}\} \}_{{1\leq i,j\leq K}}$, hence there are $2K^2+K$ parameters, thus $\theta \in \Theta \subset \R^{K(2K+1)}_{+}$.

Such a representation is constructed on a set of measurable sets $\left\{A_j\right\}_{1\leq j\leq K}$ that partition $\M$ such that
\begin{itemize}
    \item[P1.] $K\in \N$ is finite,
    \item[P2.] $A_i \cap A_j = \emptyset$ if $i \neq j$,
    \item[P3.] $\bigcup_{i=1}^K A_i=\M$, and
    \item[P4.] $\mu(A_i)>0$ for every $i=1,2,\dots K$.    
\end{itemize}
The purpose of properties P1-P3 is to construct a finite partition of $\M$. P4 ensures that the probability an event has a mark taking a value in $A_i$ is positive for every set $A_i$ and so that the density has a finite mean value on the set $A_i$. For example, $\M$ could be a compact subset of $\R^d$, in which case we would use the usual Euclidean metric and Lebesgue measure, and the point process would then be spatial-temporal. Spatial-temporal Hawkes processes have many applications such as earthquakes \cite{Ogata1998,Ogata2006}, crime \cite{mohler2011self} or terrorism \cite{Jun2024}. Alternatively, if $\M\subset \mathbb Z^d$ one should use the discrete metric and counting measure. Hawkes processes with integer marks have been previously applied to social media event stream data \cite{rizoiu2017hawkes}, graphical representations of multivariate Hawkes processes \cite{embrechts2018hawkes} and a wide variety of financial data applications \cite{embrechts2011multivariate}.

\section{Results}\label{sec:results}

In this section we present the main results of our article in full detail. Specifically, we demonstrate the existence of a multivariate representation that approximates a univariate marked Hawkes process in the $L^1$ sense. Following this, under slightly stricter assumptions, we also demonstrate that the multivariate representation is stationary if the target process is also stationary. We finally demonstrate that the parameters of the multivariate representation are identifiable.

\subsection{The special case of Mark-Separable Point Processes}\label{sec:MarkSeparableCase}

In this subsection we first demonstrate that there exists a multivariate representation of a compound Poisson process. We demonstrate that the MLE of the multivariate representation, without excitation, are exactly those of a histogram estimator. This specific case is included to build intuition as to what our multivariate representation does. To do so, we require Proposition 14.3.II (c) of \cite{daley2008introduction}.

\begin{proposition}[Proposition 14.3.II (c) of \cite{daley2008introduction}]\label{prop:GandMDVJ}
    Given a predictable conditional intensity $\lambda(t,M|\HH_t)$, the associated ground process has predictable conditional intensity
\begin{equation}\label{eq:groundProcess}
\lambda_g(t|\HH_t)=\int_{\mathcal M}\lambda(t,M|\HH_t)d\mu(M)
  \end{equation}
and the associated mark distribution with density
  \begin{equation}\label{eq:MarkDensity}
      f(M|\HH_t)=\frac{\lambda(t,M|\HH_t)}{\lambda_g(t|\HH_t)}.  \end{equation}
  \end{proposition}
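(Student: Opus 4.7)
The plan is to derive both displayed identities directly from the defining limit \eqref{eqn:condintensity} together with the observation that the ground process satisfies $N_g([t,t+h)) = N([t,t+h) \times \mathcal{M})$. The first identity \eqref{eq:groundProcess} is essentially an exchange of integration over the mark space with the limit in $h$, while the second identity \eqref{eq:MarkDensity} is an infinitesimal Bayes' rule.

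For \eqref{eq:groundProcess}, I would begin by writing $\lambda_g(t|\mathcal{H}_t) = \lim_{h \downarrow 0} h^{-1} \mathbb{E}[N([t,t+h)\times\mathcal{M})|\mathcal{H}_t]$ and then fixing a sequence of measurable partitions $\{A_i^{(n)}\}_{i=1}^{K_n}$ of $\mathcal{M}$ with mesh diameter shrinking to zero, possible because $\mathcal{M}$ is a compact CSMS with $\mu(\mathcal{M})<\infty$. By countable additivity of $N$,
\begin{equation*}
\mathbb{E}[N([t,t+h)\times\mathcal{M})|\mathcal{H}_t] = \sum_{i=1}^{K_n} \mathbb{E}[N([t,t+h)\times A_i^{(n)})|\mathcal{H}_t],
\end{equation*}
and by \eqref{eqn:condintensity} applied to representative points $M_i^{(n)} \in A_i^{(n)}$, each summand is asymptotically $h\,\mu(A_i^{(n)})\,\lambda(t,M_i^{(n)}|\mathcal{H}_t)$ as $h \downarrow 0$ and the mesh tends to zero. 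Passing to the limit recovers a Riemann-type sum that converges to $\int_{\mathcal{M}} \lambda(t,M|\mathcal{H}_t)\,d\mu(M)$, yielding \eqref{eq:groundProcess}.

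For \eqref{eq:MarkDensity}, I would observe that the $\mathcal{H}_t$-conditional probability that an event occurring in $[t,t+h)$ has its mark in a small ball $\mathbb{B}(M,\delta)$ is the ratio
\begin{equation*}
\frac{\mathbb{E}[N([t,t+h)\times \mathbb{B}(M,\delta))|\mathcal{H}_t]}{\mathbb{E}[N([t,t+h)\times \mathcal{M})|\mathcal{H}_t]}.
\end{equation*}
Dividing numerator and denominator respectively by $h\,\mu(\mathbb{B}(M,\delta))$ and $h$, then letting $h,\delta \downarrow 0$, the numerator tends to $\lambda(t,M|\mathcal{H}_t)$ by \eqref{eqn:condintensity} while the denominator tends to $\lambda_g(t|\mathcal{H}_t)$ by the first part; the resulting quotient is, by construction, the density of the mark conditional on an event at $t$, giving \eqref{eq:MarkDensity} on the event $\{\lambda_g(t|\mathcal{H}_t)>0\}$.

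The main obstacle is rigorously justifying the interchange of the $h,\delta \downarrow 0$ limits with the summation over the partition (equivalently, with the integral over $\mathcal{M}$). Predictability of $\lambda(t,\cdot|\mathcal{H}_t)$, together with integrability against $\mu$ guaranteed by finiteness of the ground intensity, provides the dominating function needed to invoke dominated convergence; simpleness and bounded finiteness of $N$ ensure the Bayes-ratio limit defines a bona fide density rather than only a regular conditional distribution. A cleaner, more structural proof (as implemented in Chapter 14 of Daley and Vere-Jones) bypasses these delicate pointwise limit manipulations entirely by working with the compensator measure $\Lambda(dt,dM)$ and its Radon-Nikodym factorisation into $\lambda_g(t|\mathcal{H}_t)\,dt$ and a predictable transition kernel $f(M|\mathcal{H}_t)\,d\mu(M)$.
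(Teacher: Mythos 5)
The paper offers no proof of this proposition: it is imported verbatim as Proposition 14.3.II(c) of Daley and Vere-Jones (2008), so there is no in-paper argument to compare yours against. Judged on its own terms, your primary argument has a genuine gap at exactly the step you flag as ``the main obstacle.'' To pass from $\sum_{i=1}^{K_n}\E[N([t,t+h)\times A_i^{(n)})\mid\HH_t]$ to $h\sum_i \mu(A_i^{(n)})\,\lambda(t,M_i^{(n)}\mid\HH_t)$ and then to the integral, you need the error in the approximation $\E[N([t,t+h)\times A_i^{(n)})\mid\HH_t]\approx h\,\mu(A_i^{(n)})\,\lambda(t,M_i^{(n)}\mid\HH_t)$ to be summable uniformly over the partition as $h$ and the mesh shrink \emph{jointly}. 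The defining limit \eqref{eqn:condintensity} is a pointwise statement at each fixed $M$ with no rate, so nothing in the hypotheses gives you that uniformity; and the dominated-convergence justification you invoke does not close the gap, because the family you would need to dominate is $M\mapsto \E[N([t,t+h)\times\mathbb{B}(M,\delta))\mid\HH_t]/(h\,\mu(\mathbb{B}(M,\delta)))$ indexed by $(h,\delta)$, and no $\mu$-integrable dominating function independent of $(h,\delta)$ is supplied or implied by predictability alone. The same uniformity problem infects the Bayes-ratio limit for \eqref{eq:MarkDensity}, whose denominator presupposes the first identity. A further, smaller issue is that \eqref{eqn:condintensity} is stated for metric balls, whereas your partition cells $A_i^{(n)}$ are arbitrary measurable sets of small diameter, so even the single-cell asymptotic needs an extra (unstated) regularity argument.

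Your closing paragraph, however, identifies the correct repair, and it is in fact the route Daley and Vere-Jones take: define the conditional intensity rigorously through the compensator, i.e.\ $\lambda(t,M\mid\HH_t)\,dt\,\mu(dM)$ compensates $N(dt\times dM)$. Then the ground process $N_g(\cdot)=N(\cdot\times\M)$ has compensator $\bigl(\int_{\M}\lambda(t,M\mid\HH_t)\,d\mu(M)\bigr)dt$ immediately by integrating out the marks (Fubini--Tonelli for the nonnegative predictable integrand), which is \eqref{eq:groundProcess}, and \eqref{eq:MarkDensity} is the predictable Radon--Nikodym disintegration of the compensator measure with respect to its temporal marginal, valid on $\{\lambda_g(t\mid\HH_t)>0\}$. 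If you promote that observation from an aside to the actual proof, the argument is complete and avoids the limit-interchange difficulties entirely; as written, the main body of your proof does not go through.
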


We use Proposition \ref{prop:GandMDVJ} in the following example to demonstrate the use of our multivariate representation.

\begin{example}\label{thm:PoissProc}

Consider a compound Poisson process with conditional intensity $\Lambda_0f(M)$ where $\Lambda_0>0$ is the arrival rate of events and $f(M)$ is the density for the independent and identically distributed marks taking values in a compact subset of $\R^d$. Suppose $\left\{A_j\right\}_{1\leq j\leq K}$ satisfy properties P1-P4 and that $\alpha_{ij}=0$ for every $i,j$. In this case, the MLE of the multivariate representation (Equation \eqref{eq:generalmodel}) are the same as the estimates given by a histogram estimator. 

 To demonstrate this, let $\{(t_i,M_i)\}_{1\leq i \leq N(T)}\subset [0,T]\times \M$ be a realisation of a compound Poisson process. By Proposition 7.3.III of \cite{daley2003introduction} the log-likelihood for the multivariate representation of the compound Poisson process is
\begin{align}
    \mathcal{L}(\bm \theta)=&\sum_{j=1}^{N(T)}\log\left(\sum_{i=1}^K \chi_{A_i}(M_j)\lambda_{0i} \right)-\int_{\mathcal M}\int_0^{T}\sum_{i=1}^K \chi_{A_i}(M)\lambda_{0i}dtd\mu(M)\nonumber \\
    =&\sum_{j=1}^{N(T)}\log(\lambda_{0k_j})-\sum_{i=1}^K \mu(A_i)\lambda_{0i}T=\sum_{i=1}^K\left(N_{i}(T)\log(\lambda_{0i})-\lambda_{0i}\mu([0,T]\times A_i) \right) \label{eq:PoissLogL}
\end{align}
where $k_j$ is the type of event $j$ (i.e. $M_j \in A_{k_j}$), $N_i(T)$ is the number of events that have arrived by time $T$ such that $M \in A_i$ and $\mu$ is the Lebesgue measure on $[0,T]\times \M \subset \R^+ \times \R^d$. Since each summand on the right hand side of Equation \eqref{eq:PoissLogL} is independent, $\lambda_{0i}$ can be estimated by maximising each summand separately. Define
\[f(\lambda_{0i})=N_i(T)\log(\lambda_{0i})-\mu([0,T]\times A_i)\lambda_{0i}.\]
It directly follows that 
\[f'(\lambda_{0i})=\frac{N_i(T)}{\lambda_{0i}}-\mu([0,T]\times A_i)=0 \implies \hat \lambda_{0i}=\frac{N_i(T)}{\mu([0,T]\times A_i)}=\frac{N_i(T)}{T\mu(A_i)}\]
which is clearly a maximum by taking the second derivative of $f$. Furthermore, $\hat \lambda_{0i}$ is exactly the MLE for a Poisson process observed on $[0,T]\times A_i$, therefore inheriting consistency and asymptotic normality, e.g. Example 1 of \cite{ogata1978asymptotic}. 

We now examine the induced ground process for the estimated multivariate representation using Proposition \ref{prop:GandMDVJ}. By Equation \eqref{eq:groundProcess},
\[ \hat \lambda_g(t|\HH_t)= \sum_{i=1}^K \mu(A_i)\hat\lambda_{0i}=\sum_{i=1}^K \frac{\mu(A_i)N_i(T)}{T\mu(A_i)}=\frac{N(T)}{T}\]
which is exactly the MLE for the rate of an unmarked Poisson process on the set $[0,T]$. Hence, $\hat \lambda_g(t|\HH_t)$ converges in probability to $\Lambda$. Consider the estimated density function computed using Equation \eqref{eq:MarkDensity}
\begin{align*}
    \hat f(M)=\frac{\sum_{i=1}^K \frac{N_i(T)\chi_{A_i}(M)}{T\mu(A_i)}}{\sum_{i=1}^K \frac{N_i(T)}{T}}    =\frac{T}{N(T)}\sum_{i=1}^K \frac{N_i(T)\chi_{A_i}(M)}{T\mu(A_i)}
\end{align*}
\begin{equation}\label{eq:HistEstimator}
    \implies \hat f(M)= \frac{1}{N(T)}\sum_{i=1}^K\frac{N_i(T)\chi_{A_i}(M)}{\mu(A_i)}.
\end{equation}
It is apparent that Equation \eqref{eq:HistEstimator} is the histogram estimator for the mark density of an i.i.d. sample of size $N(T)$. Hence, if the cells $A_i$ are selected correctly this estimator of $f$ is consistent.
\end{example}

\begin{remark}
The assumption of setting $\alpha_{ij}=0$ for every $i,j$ may seem unjustified theoretically. However, this assumption could be justified in practice, by using a one-sided Kolmogorov-Smirnov test for the uniformity of $\{t_i\}_{1\leq i \leq N(T)}$ on $[0,T]$ which may provide insufficient evidence to suggest temporal clustering of events.
\end{remark}

As a result of Example \ref{thm:PoissProc}, the analytic properties of a histogram estimator are automatically inherited for our multivariate representation. When $\mathcal M =[0,1]$ we present the main results of \cite{Wasserman2006} Section 6.2 which suggests how to optimally select the sets $A_i$ so that the histogram estimator converges to the true density. First, the optimal histogram estimator $\hat f$ to the true density $f$ minimises the mean integrated square error, $\text{MISE}(\hat f,f)$. 

In this case we should select $A_i$ as half-open intervals of the form
$A_i=\left[\frac{i-1}{K},\frac{i}{K} \right),$ each of uniform width $h=1/K$; clearly these sets satisfy properties P1-P4. By Theorem 6.11 of \cite{Wasserman2006}, under mild assumptions on $f$, the width of $A_i$ that minimises the MISE is $\mathcal{O}\left(N(T)^{-1/3}\right)$ in which case $\text{MISE}\left(\hat f,f\right)=\mathcal{O}\left(N(T)^{-2/3}\right)$. Therefore, for optimal statistical efficiency in the case of real valued marks, one should select $K=\lceil N(T)^{-1/3} \rceil$ and uniform bins $A_i$ of width $K^{-1}$. For more information see Section 3 of \cite{izenman1991review} and references therein.

While analytic results for the MLE given the multivariate representation of a compound Poisson processes can be derived and shown to be equivalent to known estimators, history dependent processes are of primary interest. The following discussion formalises how a multivariate Hawkes representation can approximate a broad class of non-separable marked Hawkes processes. Approximation of separable marked Hawkes processes is shown as a special case, as such processes are subsumed within the larger class of non-separable marked Hawkes processes. 

\subsection{Non-separable Point Processes}

For the following discussion we use a fixed realisation for each process, so that the conditional intensity functions have discontinuities at the same instants in time. Underlying our discussion is the result that any measurable function can be approximated by a simple function, which is the intuition for Theorems \ref{prop:SeparableHPL1} and \ref{thm:Stationarity}.

This section is devoted to showing that our multivariate representation is able to represent a large class of non-separable univariate marked Hawkes processes. Suppose that $(\I_1,\mathcal{F}_1,\Pb_1)$ and $(\I_2,\mathcal{F}_2,\Pb_2)$ are probability spaces and that
\[f_1:\I_1\to L^1(\M), \quad f_2:\I_2 \to L^1(\M), \quad \xi:\I_2\to C(\M).\]
Specifically, $f_1,f_2$ are density valued random variables and $\xi$ is a continuous function valued random variable for all of the mark space $\M$. Specifically, for every $\omega_1\in \I_1, \omega_2\in \I_2$, $f_1(M;\omega_1)$ and $f_2(M;\omega_2)$ are densities on $\M$ with respect to the same reference measure $\mu$ and $\xi(M;\omega_2)$ is a continuous real-valued function known as the mark variable productivity, a continuous analog to those discussed in \cite{chiang2022hawkes, paik2022nonparametric}. We next consider the general form of a conditional intensity characteristic of a marked linear Hawkes process

\begin{multline}\label{eq:separableHPINT}
\lambda_{\text{HP}}(t,M|\HH_t)=\Lambda\int_{\I_1}f_1(M;\omega)d\Pb_1(\omega)+\\
\int_{\I_2}f_2(M;\omega)\sum_{i:t_i<t}g(t-t_i;M_i)\xi(M_i;\omega)d\Pb_2(\omega)
\end{multline}
where $g(t;M)$ is the time varying kernel dependent on the parameter $\beta(M)$, such that for every $M\in \M$, $\int_0^\infty g(s;M)ds=1$.

The incremental process of arrivals for a given realisation of a non-separable Hawkes process can be thought of in stages. Since $f_1,f_2$ are densities pointwise on $\I_1,\I_2$ respectively, their expected values are as well ($f_1,f_2$ are non-negative so use Tonelli's theorem to integrate over $\M$ first). We now apply Equation \eqref{eq:groundProcess} and find that points arrive with intensity
\[\lambda_g(t|\HH_t)=\Lambda+\sum_{i:t_i<t}g(t-t_i)\int_{\I_2}\xi(M_i;\omega)d\Pb(\omega),\]
which is exactly the ground process of a marked Hawkes process. Once a point arrives, at time $t$, it is an immigrant (in which case set $\tau=1$) with probability \[p=\frac{\Lambda}{\Lambda+\sum_{i:t_i<t}g(t-t_i)\int_{\I_2}\xi(M_i;\omega)d\Pb_2(\omega)},\] or an offspring event (in which case $\tau=2$) with probability $1-p$. Then the mark of that point is simulated as a realisation of the random variable with density $f_\tau(M;\omega_\tau)$. Heuristically, for a given point in a given realisation, the mark value depends on three rolls of dependent dice. 

We now formalise the result, in Theorem \ref{prop:SeparableHPL1}, that there exists a parameter vector for Equation \eqref{eq:generalmodel} such that $\lambda_\theta(t,M|\HH_t)$ can approximate $\lambda_{\text{HP}}(t,M|\HH_t)$ well in the $L^1$ sense.

\begin{theorem}[Existence]\label{prop:SeparableHPL1}
Consider a non-separable marked Hawkes process with intensity given by Equation \eqref{eq:separableHPINT} where $f_1,f_2$ are densities with respect to the measure $\mu$ and $\xi$ is a real-valued continuous function. Moreover, suppose that $f_1,f_2,\xi$ are all equicontinuous on $\M$ with respect to $\I_1$ and $\I_2$. Furthermore, assume that $g(t;\beta)$ is uniformly continuous with respect to $\beta$ and that $\beta(M)$ is continuous in $\M$. Then for every $T>0$, fixed realisation, and $\epsilon>0$ there exists a $K \in \N$, a parameter vector $\tilde \theta_K$, and set of sets $\left\{A_i\right\}_{1 \leq i \leq K}$ that satisfies properties P1-P4 such that, 

\[\norm{\lambda_{\text{HP}}(t,M|\HH_t)-\lambda_{\tilde\theta_K}(t,M|\HH_t)}_{L^1([0,T]\times \mathcal M)}<\epsilon.\]
\end{theorem}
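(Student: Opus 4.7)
The plan is to pass to an equivalent form of $\lambda_{\text{HP}}$ in which two continuous functions of the mark variables appear, then freeze those functions on a sufficiently fine partition of $\M$. First, I define
$$\bar f_1(M) := \int_{\I_1} f_1(M;\omega)\,d\Pb_1(\omega), \qquad h(M,M') := \int_{\I_2} f_2(M;\omega)\,\xi(M';\omega)\,d\Pb_2(\omega).$$
Because, for the fixed realisation, the excitation sum has only $N(T)$ terms, I can interchange the summation with the integration over $\I_2$ to rewrite \eqref{eq:separableHPINT} as
$$\lambda_{\text{HP}}(t,M\mid\HH_t) \;=\; \Lambda\,\bar f_1(M) \;+\; \sum_{i:\,t_i<t} h(M,M_i)\, g\bigl(t-t_i;\beta(M_i)\bigr).$$
The equicontinuity hypotheses combined with the normalisations $\int_\M f_1(\cdot;\omega)\,d\mu = \int_\M f_2(\cdot;\omega)\,d\mu = 1$ and compactness of $\M$ (a finite covering/chaining argument then gives a uniform-in-$\omega$ pointwise bound on $f_1, f_2$) imply that $\bar f_1$ is uniformly continuous on $\M$ and that $h$ is uniformly continuous on $\M\times\M$.

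Next, I would pick a common tolerance $\eta>0$ to be determined at the end, and use the uniform continuity of $\bar f_1$, $h$, $\beta$, together with the assumed uniform continuity of $g(t;\beta)$ in $\beta$, to choose $\delta>0$ so that $d(M,M')<\delta$ forces each of $|\bar f_1(M)-\bar f_1(M')|$, $\sup_{M''}|h(M,M'')-h(M',M'')|$, $\sup_{M''}|h(M'',M)-h(M'',M')|$, and $\sup_t |g(t;\beta(M))-g(t;\beta(M'))|$ to be below $\eta$. Since $\M$ is a compact CSMS, it admits a finite measurable partition $\{A_i\}_{i=1}^K$ of diameter less than $\delta$ with $\mu(A_i)>0$, and hence P1--P4 hold. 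Fix representatives $M_i^*\in A_i$ and set
$$\lambda_{0i} := \Lambda\,\bar f_1(M_i^*), \qquad \alpha_{ij} := h(M_i^*,M_j^*), \qquad \beta_{ij} := \beta(M_j^*),$$
so that each $g_{ij}(\cdot;\beta_{ij})=g(\cdot;\beta(M_j^*))$ is a probability density, as required by \eqref{eq:generalmodel}.

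The remaining work is the $L^1$ estimate, which I would split into background and excitation contributions. The background contribution is bounded by $T\Lambda\int_\M\sum_i\chi_{A_i}(M)|\bar f_1(M)-\bar f_1(M_i^*)|\,d\mu(M)\leq T\Lambda\mu(\M)\eta$. For the excitation term, on $M\in A_i$ with $M_\ell\in A_{j(\ell)}$, I would apply the triangle inequality with the intermediate quantity $h(M_i^*,M_{j(\ell)}^*)\,g(t-t_\ell;\beta(M_\ell))$, bounding one part by $\|h\|_\infty \sup_t |g(t;\beta(M_\ell))-g(t;\beta(M_{j(\ell)}^*))|$ and the other by $g(t-t_\ell;\beta(M_\ell))\,|h(M,M_\ell)-h(M_i^*,M_{j(\ell)}^*)|$. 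Integrating over $t\in[0,T]$ and $M\in\M$, using $\int_0^\infty g(u;\beta)\,du=1$, and summing over the finitely many events of the fixed realisation produces an overall bound of the form $C\eta$ where $C=C(T,\Lambda,\mu(\M),\|h\|_\infty,N(T))$. Setting $\eta:=\epsilon/(C+1)$ and taking the corresponding $K$ gives the claim.

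The main obstacle is the composite bookkeeping in the excitation term: the mark of interest $M$, the parent mark $M_\ell$, and the kernel parameter $\beta(M_\ell)$ all shift simultaneously between the true intensity and the approximation, and a single partition must control all three. This works only because the equicontinuity hypotheses force the three moduli of continuity to vanish together as the partition is refined. A secondary subtlety is upgrading the equicontinuity of $\{f_2(\cdot;\omega)\}$ and $\{\xi(\cdot;\omega)\}$ to joint uniform continuity of $h$ on $\M\times\M$, which is handled by a triangle-inequality split inside the $\Pb_2$-integral together with the uniform boundedness obtained from the density normalisation and compactness of $\M$.
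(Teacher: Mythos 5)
Your proposal is correct and follows essentially the same route as the paper's proof: establish uniform continuity of $\bar f_1$ and of the integrated product $h=\overline{f_2\xi}$ on the compact mark space, build a fine finite partition on which all relevant moduli of continuity are small, freeze the parameters cell-by-cell (the paper uses cell averages where you use point evaluations at representatives, an immaterial difference), and bound the $L^1$ error by a three-way triangle-inequality split of the excitation term using $\int_0^\infty g(u;\beta)\,du=1$ and the finiteness of $N(T)$ for the fixed realisation. No gaps.
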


\begin{proof}[Proof Sketch]
The sketch of the proof is as follows. We first show \[\bar f_1(M)=\int_{\I_1}f_1(M;\omega_1)d\Pb_1(\omega_1), \quad \overline{f_2\xi}(M,M')=\int_{\I_2}f_2(M;\omega)\xi(M';\omega)\Pb_2(\omega), \]
are uniformly continuous on $\M$. We then use compactness to decompose $\M$ into sets $A_i$ so that we can approximate $\bar f_1$ and $\overline{f_2\xi}$ by a simple function arbitrarily well. $L^1$ convergence is a directly follows. A complete proof is presented in Appendix \ref{sec:L1to0Proof}. 
\end{proof}

We note that Theorem \ref{prop:SeparableHPL1} could be extended component-wise to a marked multivariate Hawkes process with $J$ component intensities. Furthermore, in specific cases where $\lambda_{\text{HP}}(t,M|\HH_t)$ varies deterministically in time it is possible to select a different parameter vector $\tilde \theta_K$ to still approximate $\lambda_{\text{HP}}(t,M|\HH_t)$, if the representation is selected wisely. Example \ref{ex:1} demonstrates this for a mark-separable renewal process.

\begin{example}\label{ex:1}
    Consider a renewal mark-separable Hawkes process, where the immigrant events arrive as a Weibull renewal process. Hence
    \[\lambda_{\text{HP}}(t,M|\HH_t)=f(M)\left(ba^b(t-t')^{b-1}+\sum_{i:t_i<t}g(t-t_i)\xi(M_i) \right)\]
    where $a,b$ are parameters and $t'$ is the most recent immigrant event. Then to approximate $\lambda_{\text{HP}}(t,M|\HH_t)$ by our multivariate representation the parameter vector is then of the form $\theta=\{\{a_{i}\}, \{b_{i}\} \{\alpha_{ij}\},\{\beta_{ij}\} \}_{{1\leq i,j\leq K}}$. Pick $\left\{A_i\right\}_{1 \leq i \leq K}$, $\alpha_{ij}$ and $\beta_{ij}$ as in the proof of Theorem \ref{prop:SeparableHPL1}. Additionally, assume that an immigrant event resets the renewal rate independent of its mark and set $b_i=b$ and $a_i=a\left(\tilde f_i\right)^{\frac{1}{b_i}}$. In this case, using $\lambda_{0i}(t)=b_ia_i^{b_i}(t-t')^{b_{i}-1}$ will approximate $f(M)ba^b(t-t')^{b-1}$ in the $L^1$ sense.
\end{example}

We now show that the multivariate representation commensurate with the parameter vector such that $\lambda_\theta(t,M)$ approximates a non-separable Hawkes process on the domain $[0,T] \times \M$ is stationary if the target Hawkes process is stationary, and for each event its mark-variable productivity is deterministic.

\begin{theorem}[Stationarity]\label{thm:Stationarity}
    Suppose the assumptions of Theorem \ref{prop:SeparableHPL1} are satisfied and the Hawkes process $\lambda_{\text{HP}}(t,M|\HH_t)$ is stationary when $\xi$ is independent of $\omega$. Then $\lambda_{\tilde\theta_K}(t,M|\HH_t)$, constructed in Theorem \ref{prop:SeparableHPL1}, is also stationary.
\end{theorem}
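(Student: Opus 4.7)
The plan is to verify the standard subcriticality condition for the multivariate representation---spectral radius of its branching matrix strictly less than one---by exploiting that the hypothesis $\xi\perp\omega$ forces this matrix to be rank one, reducing its spectral radius to a Riemann sum for the analogous integral condition governing stationarity of $\lambda_{\text{HP}}$.

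First I would translate the stationarity of $\lambda_{\text{HP}}$ into an analytic inequality. Writing $\bar f_2(M):=\int_{\I_2}f_2(M;\omega)d\Pb_2(\omega)$, which is itself a density by Tonelli, the intensity \eqref{eq:separableHPINT} collapses under $\xi\perp\omega$ to $\Lambda\bar f_1(M)+\bar f_2(M)\sum_{i:t_i<t}g(t-t_i;M_i)\xi(M_i)$. Each event with mark $M_i$ generates on average $\xi(M_i)$ offspring whose marks are i.i.d.\ with density $\bar f_2$; this is a mark-dependent branching process whose offspring kernel factorizes as $\xi(M)\bar f_2(M')$, so subcriticality---and hence stationarity of $\lambda_{\text{HP}}$---is equivalent to $\int_\M\xi(M)\bar f_2(M)d\mu(M)<1$.

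Next I would identify the branching matrix of $\lambda_{\tilde\theta_K}$. Since each $g_{ij}$ is a density and $\lambda_{\tilde\theta_K}$ is a density on $[0,T]\times\M$, an event with mark in $A_j$ generates on average $\int_0^\infty\int_{A_i}\alpha_{ij}g_{ij}(s)d\mu(M)ds=\mu(A_i)\alpha_{ij}$ offspring with marks in $A_i$. Hence the $K\times K$ branching matrix is $B_{ij}=\mu(A_i)\alpha_{ij}$ and, by Lemma 6.3.II of \cite{daley2003introduction}, stationarity reduces to $\rho(B)<1$. In the proof of Theorem \ref{prop:SeparableHPL1}, $\alpha_{ij}$ is produced as a simple-function approximation on $A_i\times A_j$ to the cross kernel $\overline{f_2\xi}(M,M')=\int_{\I_2}f_2(M;\omega)\xi(M';\omega)d\Pb_2(\omega)$. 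Under $\xi\perp\omega$ this kernel factorizes as $\bar f_2(M)\xi(M')$, so whichever representative-point or cell-average prescription is used, $\alpha_{ij}$ factorizes as $u_iv_j$ with $u_i$ determined by $\bar f_2|_{A_i}$ and $v_j$ by $\xi|_{A_j}$; consequently $B_{ij}=\mu(A_i)u_iv_j$ is rank one.

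For a rank-one matrix $ab^\top$ the spectral radius equals $b^\top a$, so $\rho(B)=\sum_{i=1}^K\mu(A_i)u_iv_i$, a Riemann sum for $\int_\M\bar f_2(M)\xi(M)d\mu(M)$. The equicontinuity hypotheses of Theorem \ref{prop:SeparableHPL1} render $\bar f_2$ and $\xi$ uniformly continuous on the compact CSMS $\M$, so this sum converges to the integral as the partition is refined. Since Theorem \ref{prop:SeparableHPL1} permits $K$ to be taken arbitrarily large---doing so only sharpens the $L^1$ bound---refining the partition further if necessary yields $\rho(B)<1$, establishing stationarity of $\lambda_{\tilde\theta_K}$. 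The chief obstacle is recognizing that the rank-one structure of $B$, which is automatic once $\xi\perp\omega$ is imposed, is precisely what allows $\rho(B)$ to be controlled via a Riemann sum; without factorization, $\rho(B)$ can only be bounded by row- or column-sum norms of $\alpha$, which need not be less than one.
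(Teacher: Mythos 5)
Your proposal is correct, and it reaches the paper's conclusion through the same essential estimate but by a different mechanism. The paper does not invoke the spectral-radius criterion directly: it instead proves a lemma that the offspring mark density induced by $\lambda_{\tilde\theta_K}$ is the simple function $\sum_i \tilde f_{i,2}\chi_{A_i}(M)$, computes the expected number of first-generation offspring of a ``typical'' non-immigrant event by integrating the column sums $E_j=\sum_i\mu(A_i)\alpha_{ij}$ against that density, obtains exactly your quantity $J=\sum_j\mu(A_j)\tilde f_{j,2}\xi_j$, and then bounds the mean cluster size by the geometric series $\sum_r J^r$ (handling immigrants separately by noting they produce a.s.\ finitely many first-generation offspring since $\int_\M\bar f_1\xi\,d\mu<\infty$). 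Your route---observing that $B_{ij}=\mu(A_i)\alpha_{ij}$ is rank one because $\alpha_{ij}=u_iv_j$, so $\rho(B)=\sum_i\mu(A_i)u_iv_i$ is a Riemann sum for $\int_\M\bar f_2\xi\,d\mu<1$---computes the identical number $J$ as the unique nonzero eigenvalue and appeals to the standard criterion $\rho(B)<1$ stated in Section 2. What your framing buys is transparency: it makes explicit that the factorisation of $\alpha_{ij}$ (equivalently, rank-one $B$) is precisely what makes the argument go through, a point the paper only surfaces in the remark following the theorem, where it concedes the method breaks down for stochastic $\xi$ because the paper's geometric-series argument needs the offspring type distribution to be generation-independent. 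What the paper's route buys is self-containedness---it rederives the subcriticality condition probabilistically rather than citing it---at the cost of an extra lemma. The only point you pass over more quickly than the paper is the finiteness of the immigrant contribution, but that is subsumed in the cluster representation once $\rho(B)<1$ and the background intensity is finite, so it is not a gap.
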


\begin{proof}[Proof Sketch]
    The stationarity condition for the non-separable Hawkes processes is that the expected cluster size is finite. To prove this, we first demonstrate that the expected number of offspring from an immigrant is finite. We then show that the total number of offspring from a non-immigrant is also finite. In this case, each independent branching process with one immigrant in the centre, a.s. has finitely many events. The condition for this is that    
    \[I=\int_\M \int_{\I_2}f_2(M;\omega)\xi(M) d\Pb_2(\omega) d\mu(M)<1.\]
    
    Using the parameter vector $\tilde \theta_K$ and sets $\{A_i\}_{1\leq i \leq K}$, as constructed in the proof of Theorem \ref{prop:SeparableHPL1}, we approximate $I$ by the integral of a simple function. We then compute the expected number of first generation offspring for our multivariate representation and show that it is strictly bounded above by $1$ demonstrating that the mean cluster size is finite a.s. concluding our argument. The full proof is presented in Appendix \ref{sec:proofstationarity}.
\end{proof}

\begin{remark}
    We conjecture Theorem \ref{thm:Stationarity} to be true for a stochastic mark variable productivity. This remains conjectured as it is challenging to compute the expected number of first generation offspring from a multivariate representation. To do so, Lemma \ref{lem:StatDensity} must be modified so that the mark distribution is fully history dependent. This in and of itself is not an issue, but in this case $\alpha_{ij}$ cannot be factorised into a term dependent on $i$ and a term dependent on $j$. The computation of the expected number of first generation offspring from a non-immigrant event relies on the factorisation of $\alpha_{ij}$, and therefore in this context is intractable. However, in most, if not all, applications $\xi$ is assumed to be deterministic so this slight loss of generality is not a major concern.
\end{remark}

While an expected result, Theorem \ref{thm:Stationarity} demonstrates that our multivariate representation has the potential to be used in practice. Of primary importance, Theorem \ref{thm:Stationarity} implies the existence of the multivariate representation $\lambda_\theta$ via Khinchin's Existence Theorem (Proposition 3.3.I of \cite{daley2003introduction}), as well as orderliness of $\lambda_\theta$, a result upon which the proof of Theorem \ref{thm:TheoremIdentifiability} is implicitly reliant \cite{daley2003introduction}. Also of the utmost importance is that the preservation of stationarity, and hence non-explosivity almost surely (a.s.), is crucial for frequentist forecasting. Predictive performance is often performed by repeatedly simulating realisations of the process over some subset of the time domain \cite{vere1998probabilities,zhuang2011next}. Therefore, if the data generating process is finite, then once the multivariate representation approximates the data generating process well enough it too is stationary facilitating the construction of frequentist forecasts. Finally, the results of Theorem \ref{prop:SeparableHPL1} imply that our representation can approximate the true data generating process arbitrarily well.

We again note that mark-separable Hawkes processes are a special case of Equation \eqref{eq:separableHPINT}, which we now discuss in the following example.

\begin{example}\label{ex:SepHawkesProcess}
Consider the mark separable Hawkes process on $[0,T]\times [M_0,M_m]$ induced by the conditional intensity function
\begin{equation}\label{eq:SepHP}
    \lambda^*(t,M|\HH_t)\vcentcolon= f(M)\left(\Lambda+\sum_{i:t_i<t}g(t-t_i)\xi(M_i)\right)
\end{equation}
where $f$ is a continuous density. It is a direct result of Theorems \ref{prop:SeparableHPL1} and \ref{thm:Stationarity} that our multivariate representation can approximate $\lambda^*$ arbitrarily well, and is stationary if the process induced by $\lambda^*$ is stationary. In the case of the mark separable Hawkes process this is that
\[I=\int_\M f(M)\xi(M)d\mu(M)<1.\]

We now present a corollary, which says that the induced mark density of the multivariate representation is a simple function approximation of the mark density $f$.
\begin{corollary}\label{cor:SepMarkDensApp}
    Suppose the target process is a mark separable Hawkes process with continuous mark density and mark variable productivity given by Equation \eqref{eq:SepHP}. Then the mark density induced by $\lambda_{\tilde \theta_K}(t,M|\HH_t)$ is 
    \[\tilde f(M)=\sum_{i=1}^K \bar f_i\chi_{A_i}(M), \quad \text{ where } \bar f_i=\frac{1}{\mu(A_i)}\int_{A_i}f(M)d\mu(M).\]
\end{corollary}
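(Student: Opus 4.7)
The plan is to apply Proposition \ref{prop:GandMDVJ} to the explicit parameter choice produced in the proof of Theorem \ref{prop:SeparableHPL1} and observe that the ratio defining the mark density collapses to a history-independent simple function. In the mark-separable setting the two functions $\bar f_1(M) = f(M)$ and $\overline{f_2\xi}(M,M') = f(M)\xi(M')$ arising in Equation \eqref{eq:separableHPINT} both factor, so the step-function approximation used in the proof of Theorem \ref{prop:SeparableHPL1} produces product parameters $\lambda_{0i} = \Lambda \bar f_i$, $\alpha_{ij} = \bar f_i \bar \xi_j$, and $g_{ij} \equiv g$, where $\bar \xi_j = \mu(A_j)^{-1}\int_{A_j} \xi(M)\,d\mu(M)$.

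First, I would substitute these parameters into Equation \eqref{eq:generalmodel} to get, for $M \in A_i$,
$$\lambda_{\tilde \theta_K}(t,M|\HH_t) = \bar f_i \left(\Lambda + \sum_{j=1}^K \bar \xi_j \sum_{t_{\ell,j}:t_{\ell,j}<t} g(t-t_{\ell,j})\right) = \bar f_i \, H(t,\HH_t),$$
where the bracketed factor $H(t,\HH_t)$ does not depend on $i$. Next, I would compute the associated ground intensity by Equation \eqref{eq:groundProcess}:
$$\lambda_g(t|\HH_t) = \int_\M \lambda_{\tilde \theta_K}(t,M|\HH_t)\,d\mu(M) = H(t,\HH_t) \sum_{i=1}^K \mu(A_i) \bar f_i.$$
By the definition of $\bar f_i$ together with properties P2 and P3, the sum equals $\sum_{i=1}^K \int_{A_i} f(M)\,d\mu(M) = \int_\M f(M)\,d\mu(M) = 1$, so $\lambda_g(t|\HH_t) = H(t,\HH_t)$.

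Finally, Equation \eqref{eq:MarkDensity} yields
$$\tilde f(M) = \frac{\lambda_{\tilde \theta_K}(t,M|\HH_t)}{\lambda_g(t|\HH_t)} = \sum_{i=1}^K \bar f_i \, \chi_{A_i}(M),$$
which is independent of $t$ and $\HH_t$, as claimed. The main obstacle is really just the bookkeeping step of identifying the factored parameter vector $\tilde \theta_K$ in the separable case; the product structure $\alpha_{ij} = \bar f_i \bar \xi_j$ is precisely what allows $H(t,\HH_t)$ to be pulled out as a common factor, and once that factorization is in place the cancellation in the mark density is an immediate consequence of $f$ being a probability density on $\M$. No additional continuity or compactness estimates beyond those already invoked in Theorem \ref{prop:SeparableHPL1} are required.
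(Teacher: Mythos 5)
Your proposal is correct and follows essentially the same route as the paper: substitute the factored parameters $\lambda_{0i}=\Lambda\bar f_i$, $\alpha_{ij}=\bar f_i\xi_j$ into Equation \eqref{eq:MarkDensity}, pull out the common history-dependent factor, and use $\sum_{i=1}^K\bar f_i\mu(A_i)=1$. The only cosmetic difference is that you take $\bar\xi_j$ to be the average of $\xi$ over $A_j$ while the paper uses a representative point value; this is immaterial since the $\xi_j$ terms cancel entirely in the ratio.
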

\begin{proof}
Since the conditions of Theorem \ref{prop:SeparableHPL1} are satisfied we compute the induced mark density, $f'$, using Equation \eqref{eq:MarkDensity} in Proposition \ref{prop:GandMDVJ} with $\tilde \theta_K$ and the set of sets $\{A_i\}$ for $ 1\leq i \leq K$ that exist as a consequence of the theorem. Specifically, $\lambda_{0i}=\Lambda \bar f_i$, $\beta_{ij}=\beta$ and $\alpha_{ij}=\bar f_i\xi_j$.  Then 
    \begin{align*}
     \tilde f(M)=&\frac{\lambda_{\tilde \theta_K}(t,M|\HH_t)}{\int_\M \lambda_{\tilde \theta_K}(t,M|\HH_t)d\mu(M)}\\=&\frac{\sum_{i=1}^K \chi_{A_i}(M)\left(\lambda_{0i}+\sum_{j=1}^K\sum_{t_{\ell,j}:t_{\ell,j}<t}\alpha_{ij}g_{ij}(t-t_{\ell,j})\right)}{\sum_{i=1}^K \mu(A_i)\left(\lambda_{0i}+\sum_{j=1}^K\sum_{t_{\ell,j}:t_{\ell,j}<t}\alpha_{ij}g_{ij}(t-t_{\ell,j})\right)}\\
     =&\frac{\sum_{i=1}^K \chi_{A_i}(M)\left(\Lambda \bar f_i+\sum_{j=1}^K\sum_{t_{\ell,j}:t_{\ell,j}<t}\xi_j \bar f_i g(t-t_{\ell,j})\right)}{\sum_{i=1}^K \mu(A_i)\left(\Lambda \bar f_i+\sum_{j=1}^K\sum_{t_{\ell,j}:t_{\ell,j}<t}\xi_j \bar f_ig(t-t_{\ell,j})\right)}\\
      =&\frac{\sum_{i=1}^K \bar f_i \chi_{A_i}(M)\left(\Lambda +\sum_{j=1}^K\sum_{t_{\ell,j}:t_{\ell,j}<t}\xi_j g(t-t_{\ell,j})\right)}{\sum_{i=1}^K \bar f_i\mu(A_i)\left(\Lambda+\sum_{j=1}^K\sum_{t_{\ell,j}:t_{\ell,j}<t}\xi_j g(t-t_{\ell,j})\right)}
      =\sum_{i=1}^K \bar f_i\chi_{A_i}(M)
    \end{align*}
where we obtained the finally equality since $\sum_{i=1}^K \bar f_i\mu(A_i)= 1$. In particular the marks of the multivariate representation are i.i.d.\ for every $K \in \N$, which is as expected considering the target process has i.i.d.\ marks. 
\end{proof}

Corollary \ref{cor:SepMarkDensApp} implies that when modelling potentially separable processes, the magnitude density can be estimated in a non-parametric way assuming $\hat \theta_T \xrightarrow[N(T)\to \infty]{p} \tilde \theta_K$ as a consequence of the continuous mapping theorem \cite{Mann1943Stochastic} since $\lambda_\theta(t,M|\HH_t)$ is continuous with respect to $\theta$. Therefore, once asymptotic distributions for the maximum likelihood estimates of multivariate processes are established analytically, it may be possible to develop tests for mark separability and compare it to existing methods such as \cite{schoenberg2004testing}.

\end{example}

\subsection{Identifiability}\label{sec:identifiable}

We conclude this section by demonstrating that the parameters of the multivariate representation are identifiable. Previous studies have partitioned the mark space in a similar way to us, however they have all performed non-parametric estimation e.g. \cite{Bacry2016First,Rambaldi2017}. Our multivariate representation is parametric, and because information is lost by turning continuous marks (for example if $\M \subset \R^d$) into discrete types, parameter identifiability is not necessarily trivial. Fortunately, given the below assumptions, the parameters of our multivariate representation defined by Equation \eqref{eq:generalmodel}, are identifiable.

\begin{itemize}
    \item[A1.] The point process with intensity $\lambda_\theta(t,M|\HH_t)$ is stationary.
    \item[A2.] The intensity $\lambda_\theta(t,M|\HH_t)$ is positive for every $(t,M) \in [0,T] \times \mathcal{M}$. 
    \item[A3.] $g_{ij}(t)$ is linearly independent of $g'_{ij}(t)$ when $\beta_{ij}\neq \beta'_{ij}$. 
    \item[A4.] In the time interval $[0,T]$ there exists an event such that $(t_k,M_k)$ such that $M_k \in A_j$ for every $j$ and that $t_1>0$.
\end{itemize}

Under assumption A1 the process is finite since it is both simple and stationary e.g. Proposition 3.3.V of \cite{daley2003introduction}. In this case the process is orderly, and $\exists \epsilon_i>0$, $\forall i \in \N$, such that $t_{i+1}-t_i>\epsilon_i$ a.s..

Assumption A2 can be relaxed slightly so that the intensity can be 0 for strict subsets of $[0,T] \times \M$. However in doing so, we would be required to define restart times, similar to \cite{bonnet2023inference}, ensuring no event of type $i$ arrives at time $t$ when $\lambda_\theta(t,M|\HH_t)=0$ for every $M \in A_i$. We leave the close extension of identifiability for parametric estimation of nonlinear multivariate Hawkes processes characterising univariate marked processes for future work.

The purpose of assumption A3 is to ensure that for a unique set of parameter values the conditional intensity function changes in time uniquely. There are several types of parametric functions that satisfy assumption A3 and have been used in univariate Hawkes process models. Examples include exponential functions $g_{ij}(t)=\exp\{-\beta_{ij}t\}$ \cite{Hawkes1971,kresin2022comparison}, power law functions for a fixed $p>0$, $g_{ij}(t)=(t+c_{ij})^{-p}$ \cite{Ogata1988,Ogata1998}, or $g_{ij}(t)=te^{-\beta_{ij}t}$ \cite{Ogata1982,VEREJONESD1982}.

Finally, assumption A4 ensures that there exists a non-empty subset of $[0,T]\times \mathcal{M}$ where the intensity depends on every parameter. Heuristically, if there was never an event of type $1$, i.e. $M_k \not \in A_1$ for every $k$, then $\lambda_{\theta}(t,M|\HH_t)$ will never depend on $\alpha_{11}$. In which case the probability structure of the process is independent of $\alpha_{11}$, and hence this parameter should not be identifiable.

\begin{theorem}[Identifiability]\label{thm:TheoremIdentifiability}
Fix $K \in \N$ to be arbitrary. Assume that the set of sets $\left\{A_j\right\}_{1\leq j\leq K}$ satisfy properties P1-P4 and that assumptions A1-A4 are satisfied. Then parameters of the multivariate representation, specified by Equation \eqref{eq:generalmodel}, are identifiable.
\end{theorem}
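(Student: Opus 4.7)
The plan is to argue pathwise: fix a realisation satisfying A1--A4 and suppose two parameter vectors $\theta=\{\lambda_{0i},\alpha_{ij},\beta_{ij}\}$ and $\theta'=\{\lambda_{0i}',\alpha_{ij}',\beta_{ij}'\}$ yield the same conditional intensity on $[0,T]\times\M$. I then peel off the parameters index by index and time interval by time interval, using P1--P4 to separate the $K$ components and A3 to recover the kernel parameters.

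First I would exploit the partition. Because $\{A_i\}$ is a disjoint cover of $\M$ (P1--P3), the indicator $\chi_{A_i}(M)$ picks out a single summand in Equation~\eqref{eq:generalmodel}. Equating $\lambda_\theta(t,M|\HH_t)=\lambda_{\theta'}(t,M|\HH_t)$ on each $A_i$ therefore reduces the problem to showing, for every fixed row $i\in\{1,\dots,K\}$, that
\[
\lambda_{0i}+\sum_{j=1}^K\sum_{t_{\ell,j}<t}\alpha_{ij}g_{ij}(t-t_{\ell,j};\beta_{ij})
=\lambda_{0i}'+\sum_{j=1}^K\sum_{t_{\ell,j}<t}\alpha_{ij}'g_{ij}(t-t_{\ell,j};\beta_{ij}')
\]
for all $t\in[0,T]$ forces $(\lambda_{0i},\alpha_{i\cdot},\beta_{i\cdot})=(\lambda_{0i}',\alpha_{i\cdot}',\beta_{i\cdot}')$. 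A4 guarantees $t_1>0$, so on $[0,t_1)$ both sums are empty, giving $\lambda_{0i}=\lambda_{0i}'$ immediately.

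Next I would induct on the ordered event times $0<t_1<t_2<\cdots$. Let $j(\ell)$ denote the type (index of $A_{j(\ell)}$) of the $\ell$th event and let $J_k\subseteq\{1,\dots,K\}$ be the set of types already observed by time $s_k$. Suppose inductively that on $[0,s_k)$ the parameters $\{\alpha_{ij},\beta_{ij}\}_{j\in J_k}$ are determined. If $j(k)\in J_k$, both sides of the above equality already agree on $(s_k,s_{k+1})$ and no new information is produced. If $j(k)\notin J_k$, subtracting the already-matched terms leaves, on the open interval $(s_k,s_{k+1})$ (which has positive length by A1 and orderliness),
\[
\alpha_{i,j(k)}\,g_{i,j(k)}(t-s_k;\beta_{i,j(k)})
=\alpha_{i,j(k)}'\,g_{i,j(k)}(t-s_k;\beta_{i,j(k)}').
\]
Because $\theta\in\R^{K(2K+1)}_+$ the coefficients are strictly positive and $g_{i,j(k)}$ is a positive density, so neither side vanishes. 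If $\beta_{i,j(k)}\neq\beta_{i,j(k)}'$, A3 says the two kernels are linearly independent, contradicting the proportionality forced above. Hence $\beta_{i,j(k)}=\beta_{i,j(k)}'$, and cancelling the common positive kernel yields $\alpha_{i,j(k)}=\alpha_{i,j(k)}'$. Finally, A4 guarantees that each of the $K$ types eventually appears in $[0,T]$, so the induction terminates having identified every $(\alpha_{ij},\beta_{ij})$ in row $i$; repeating across $i=1,\dots,K$ identifies $\theta$ in full.

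The main obstacle, and the point on which A3 and the strict positivity of $\theta$ must be combined carefully, is the step extracting both $\alpha$ and $\beta$ simultaneously from a single equality $\alpha g(\cdot;\beta)=\alpha' g(\cdot;\beta')$ on an open time interval. Everywhere else the argument is essentially bookkeeping around the partition structure and the observation, justified by A1 plus simplicity (Proposition 3.3.V of \cite{daley2003introduction}), that the realisation is orderly so that the inter-event open intervals are genuinely nonempty and the induction is well-defined. Assumption A2 enters implicitly by ensuring that the pathwise argument is being carried out on a realisation whose history distinguishes all components; relaxing it would require introducing restart times as noted in the remark preceding the statement.
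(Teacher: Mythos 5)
Your proposal is correct and follows essentially the same route as the paper's proof: establish $\lambda_{0i}=\lambda_{0i}'$ on $[0,t_1)$ where the history is empty, then induct over the first occurrence of each event type, cancelling already-identified terms and invoking A3 on the resulting single-kernel identity to recover $\beta_{ij}$ and then $\alpha_{ij}$, with A4 guaranteeing the induction reaches every type. The only cosmetic difference is that you appeal to strict positivity of the parameters to dismiss the degenerate case $\alpha_{ij}=\alpha_{ij}'=0$, which the paper instead handles explicitly by noting that the corresponding $\beta_{ij}$ is then trivially unidentifiable because the probability structure does not depend on it.
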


\begin{proof}[Proof Sketch]
    Because conditional intensity functions uniquely define the finite dimensional distributions of the point process \cite{daley2003introduction}, to show identifiability we must demonstrate that for almost every $(t,M) \in [0,T]\times \mathcal M$ that $\lambda_\theta(t,M|\HH_t)=\lambda_{\theta'}(t,M|\HH_t)  \iff \theta=\theta'$. It is obvious that if $\theta=\theta'$ then $\lambda_\theta(t,M|\HH_t)=\lambda_{\theta'}(t,M|\HH_t)$.

Our argument for the other direction relies on partitioning $[0,T]\times \mathcal M$ into cells of the form $(t_{k},t_{k+1}]\times A_i$ (e.g. Figure \ref{fig:IdentifiabilityDiscretisationMB}) where $t_k$ is the $k^{th}$ event. We then prove identifiability of the model in the cells $(t_{1},t_{2}]\times A_i$, where $t_1$ is the first event. We then induct over the event types to prove the model is identifiable in the next cell where the conditional intensity function depends on new parameters. 
The full proof of Theorem \ref{thm:TheoremIdentifiability} is presented in Appendix \ref{Sec:IdentifiabilityProof}.
\begin{figure}[h!]
    \centering
    \begin{tikzpicture}
    \draw[step=2cm,gray,very thin] (-2.001,-2) grid (5.9,6);
    \draw (-2,6.2) node {0};
    \draw (-2.4,5.8) node {$M_0$};
    \draw (0,6.2) node {$t_1$};
    \draw (2,6.2) node {$t_2$};
    \draw (4,6.2) node {$t_3$};
    \draw (-1,-1) node {$A_4$};
    \draw (-1,1) node {$A_3$};
    \draw (-1,3) node {$A_2$};
    \draw (-1,5) node {$A_1$};
    \draw (1,-1) node {$A_4$};
    \draw (1,1) node {$A_3$};
    \draw (1,3) node {$A_2$};
    \draw (1,5) node {$A_1$};
    \draw (3,-1) node {$A_4$};
    \draw (3,1) node {$A_3$};
    \draw (3,3) node {$A_2$};
    \draw (3,5) node {$A_1$};
    \draw (5,-1) node {$A_4$};
    \draw (5,1) node {$A_3$};
    \draw (5,3) node {$A_2$};
    \draw (5,5) node {$A_1$};
     \draw[gray,very thin,-Stealth] (5.9,6) -- (5.99,6);
     \draw[gray,very thin,-Stealth] (5.9,4) -- (5.99,4);
     \draw[gray,very thin,-Stealth] (5.9,2) -- (5.99,2);
     \draw[gray,very thin,-Stealth] (5.9,0) -- (5.99,0);
     \draw[gray,very thin,-Stealth] (-2,-2) -- (-2,-2.01);
     \draw[gray,very thin,-Stealth] (0,-2) -- (0,-2.01);
     \draw[gray,very thin,-Stealth] (2,-2) -- (2,-2.01);
     \draw[gray,very thin,-Stealth] (4,-2) -- (4,-2.01);
\end{tikzpicture}
    \caption{Partitioning of $[0,T]\times \M$}
    \label{fig:IdentifiabilityDiscretisationMB}
\end{figure}
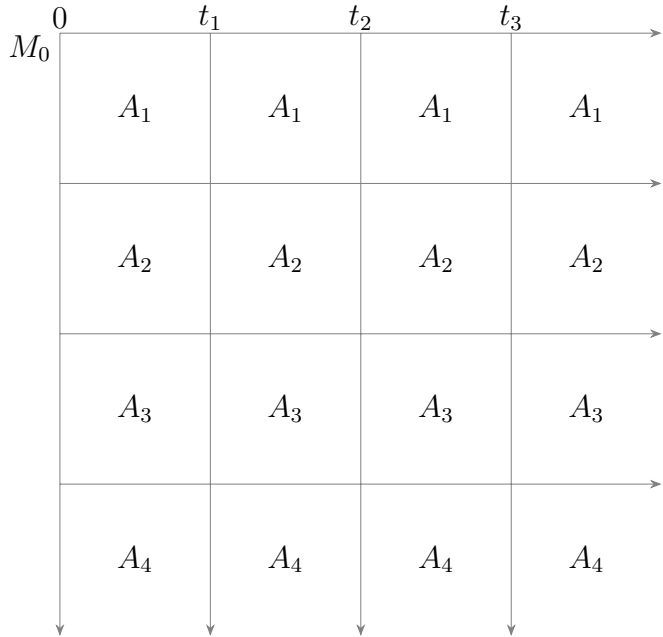
\end{proof}
\section{Simulation study}\label{Sec:Simulation}
The following study provides numerical validation for Theorem \ref{prop:SeparableHPL1}. We observe convergence of simulated realisations of the univariate process observed on increasingly large time windows to unmarked multivariate processes with an increasing number of component intensities. We conclude this study with discussion of the increased model complexity induced by the (possibly) larger number of parameters necessitated by a multivariate representation relative to a univariate representation. 
\subsection{Simulation procedure} We consider a stationary univariate marked Hawkes process with a constant background rate and exponential kernel function, \textit{i.e.} \begin{equation}\label{eqn:sim_uni}
    \lambda_{\text{HP}}(t|\mathcal{H}_t)=\lambda_0+\sum_{i:t_i<t} \alpha\exp(-\beta(t-t_i))
\end{equation} which is a special instance of Equation \eqref{eq:separableHPINT} where $\xi(M)=\alpha$ and $f_1=f_2$ where both are constant in $\omega$. In this case, the time varying kernel of the target Hawkes process, $\alpha \exp(-\beta t)$, is not a density but it could easily be normalised to. Such a process is canonical in the Hawkes literature, and can be represented as an unmarked univariate Hawkes process. It was chosen here because the separable uniform mark distribution is maximally entropic, and therefore minimally informative with respect to the ground intensity. This means that the estimates of the multivariate Hawkes representation will be maximally statistically inefficient, and therefore this study provides a heuristic  ``upper bound'' on the error induced by the increased number of parameters necessitated by the multivariate representation. Marks were simulated as draws from a discrete uniform distribution $f\sim\text{Unif}(1,K)$. 

We simulated S=128 realisations on the time window $[0,5056]$. Subsets of each realisation were then observed on twenty increasingly large time windows commensurate with a range of $10^2$ to $10^4$ points. Maximum likelihood parameter estimates were obtained for multivariate Hawkes process representations $\lambda_{\theta}(t,M|\mathcal{H}_t)$ given $K=1,2,\ldots, 6$ component intensities and an exponential kernel matrix. The properties of such estimates are discussed in \cite{ozaki1979maximum}; consistency of MLE for multivariate Hawkes processes is widely conjectured and numerically validated \cite{bowsher2007modelling}. The parameter vector $\theta^*=\{\alpha^*=1,\beta=^*2,\mu^*=1\}$ was specified as $\int_0^\infty \alpha^*\exp(-\beta^* u) du=\frac{1}{2}$, guaranteeing stationarity. 
\subsection{Simulation results and interpretation}
Given specified parameter values and a uniform mark distribution,  \textit{ansatz} parameter values are equal to $\bm\theta^*=\{\bm\alpha^*, \bm\beta^*, \bm\mu^*\}$ where the vector $\bm\mu=\mu^* K^{-1}\bm 1_K$, $\bm\alpha=\alpha^* K^{-1}\bm 1_K \bm1_K^{\top}$ and $\bm\beta={\beta^*}\bm 1_K \bm1_K^{\top}$ where $\bm 1_K=(1,1,\ldots, 1)\in\R^K$. Median absolute error (MAE) was then computed for each MLE estimated parameter vector $\bm{\hat\theta}$, defined by $\text{MAE}(\{\bm{\hat\theta}_s\}_{s=1}^S,\bm\theta^*)=\text{median}\{||\bm{\hat\theta}_s-\bm\theta^*||_1\}_{s=1}^S$. 
\begin{figure}[!h]
\includegraphics[width=1\textwidth]{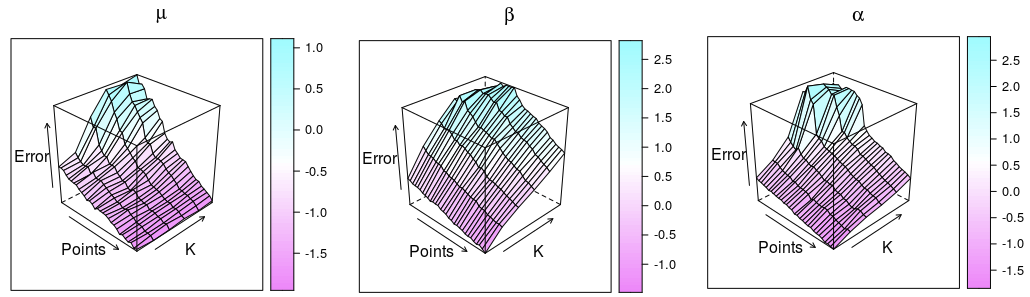}
\caption{Error associated with parameter estimates of multivariate Hawkes representations of univariate marked Hawkes process with an exponential kernel function. Plots show median $L^1$ error of parameter estimates across 128 realisations of size ranging from $10^2$ to $10^4$ points. Error is generally larger as the number of components $K$ increases. $90$\% confidence intervals are shown in appendix \ref{sec:appendixPlots}. Number of points and error are shown on the $\log_{10}$ scale.}
\label{fig:simulation1}
\end{figure}

Multivariate unmarked Hawkes process representations of univariate marked Hawkes processes have more parameters in most cases. For example, given an exponential kernel matrix and constant background parameters, a multivariate Hawkes process has with $K$ components, $K+2K^2$ parameters. A marked univariate Hawkes process with exponential kernel has $3+M$ parameters where $M$ is the dimension of the parameter space of the mark distribution, and in most cases, $K+2K^2>3+M$. However, despite the larger number of parameters, given its flexibility to characterise mark distributions quasi-parametrically, the multivariate representation is a arguably the more parsimonious and computationally tractable model in many scenarios of interest.

A discrete uniform mark distribution was specified because it is maximally entropic (\textit{i.e.} minimally informative with respect to the ground intensity) and therefore, the results visualized in Figure \ref{fig:simulation1} can be used as a rough heuristic bound for worst-case MAE induced by the larger number of parameters necessary to specify a multivariate Hawkes representation of a univariate marked process. In particular, simulated MAE appears $\mathcal{O}\left(K \cdot N(T)^{-\frac{1}{2}}\right)$.

Comparison with the cLSTM neural network architecture \cite{mei2017neural} with regards to the number of parameters required by the multivariate representation is most apt as the cLSTM allows for univariate integer-marked Hawkes process data to be represented by an unmarked parametric multivariate Neural Hawkes process. The cLSTM architecture has the fewest parameters relative to other options such as a fully connected feed forward architecture \cite{omi2019fully} or transformer architecture \cite{zuo2020transformer}, but still has  $d(2K+14d+1)$ parameters given $d$ hidden units and $K$ component intensities. The need for regularization of multivariate Hawkes processes is well discussed in the literature, for example see \cite{bacry2020sparse}, and we leave the ramifications of sparsity inducing regularization in the context of multivariate Hawkes representations of univariate marked Hawkes processes to future work. All code for this study is freely available at \url{https://github.com/ckres213/MVH_Rep}.

\section{Conclusion}\label{sec:conclusion}

In this article we propose a multivariate Hawkes representation for parameterizing univariate marked Hawkes processes. We demonstrate its stationarity, identifiability, and that for a given realisation, the conditional intensity functions of the representative and approximated processes converge in $L^1$ as the number of components increases. We further present a simulation study demonstrating the $L^1$ convergence of parameter estimates to theoretical values given a maximally entropic mark distribution. 

We suggest two possible avenues for future research. Because the density estimator for our process is akin to a histogram estimator, which is clear from Example \ref{thm:PoissProc}, we hypothesise a setup which is analogous to that of a kernel density estimator (see \cite{baddeley2015spatial} for an application to spatial point processes). This would likely increase the ease of application for the multivariate representations, especially in the case of high dimensional mark spaces. Finally, inducing sparsity to reduce the dimension of the parameter space by coalescing subcells of the mark space would likely decrease the model complexity of the representation, thereby increasing its usability.

\begin{appendix}

\section{Proof of Theorem \ref{prop:SeparableHPL1}}\label{sec:L1to0Proof}
\begin{proof}

    We show that for each $T>0$ and $\epsilon>0$ there exists a $K\in \N$, parameter vector $\tilde \theta_K$, and sets $\left\{ A_i\right\}_{1 \leq i \leq K}$ satisfying properties P1-P4, such that 
    \begin{equation}\label{eq:HPApprCompactForm}
        I\vcentcolon=\norm{\lambda_{\text{HP}}(t,M|\HH_t)-\lambda_{\tilde \theta_K}(t,M|\HH_t)}_{L^1([0,T]\times \mathcal M)}<\epsilon.
    \end{equation}

     We first present a relevant lemma.
     \begin{lemma}
Suppose that $(\I_1,\mathcal{F}_1,\Pb_1)$ and $(\I_2,\mathcal{F}_2,\Pb_2)$ are probability spaces and that
\[f_1:\I_1\to L^1(\M), \quad f_2:\I_2 \to L^1(\M), \quad \xi:\I_2\to C(\M)\]
where $f_1,$ $f_2$ and $\xi$ are equicontinuous on $\M$, a compact CSMS, for $\I_1$ or $\I_2$ respectively. Define
\[\bar f_1(M)=\int_{\I_1}f_1(M;\omega_1)d\Pb_1(\omega_1), \quad \overline{f_2\xi}(M,M')=\int_{\I_2}f_2(M;\omega)\xi(M';\omega)d\Pb_2(\omega), \]
then $\bar f_1$ and $\overline{f_2\xi}(M,M')$ are uniformly continuous on $\M$ and $\M^2$ respectively.
     \end{lemma}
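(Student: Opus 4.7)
The plan is to establish both uniform-continuity statements by pulling the desired oscillation bound inside the integrals against $\Pb_1$ or $\Pb_2$, exploiting that these are probability measures of total mass one. First I would handle $\bar f_1$: given $\epsilon>0$, equicontinuity of the family $\{f_1(\cdot;\omega_1)\}_{\omega_1\in\I_1}$ on the compact CSMS $\M$ produces a single $\delta>0$ with $|f_1(M;\omega_1)-f_1(M';\omega_1)|<\epsilon$ whenever $d(M,M')<\delta$, uniformly in $\omega_1$. Integrating against $\Pb_1$ and applying the triangle inequality immediately yields $|\bar f_1(M)-\bar f_1(M')|\le\epsilon$, so $\bar f_1$ is uniformly continuous on $\M$. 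The identical one-line argument shows that
\[\bar f_2(M):=\int_{\I_2}f_2(M;\omega)\,d\Pb_2(\omega),\qquad \bar\xi(M):=\int_{\I_2}|\xi(M;\omega)|\,d\Pb_2(\omega)\]
are also uniformly continuous on $\M$; being continuous on a compact space, they are bounded by some $C_f,C_\xi<\infty$.

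For $\overline{f_2\xi}$ I would use the standard product decomposition
\[f_2(M;\omega)\xi(M';\omega)-f_2(N;\omega)\xi(N';\omega)=[f_2(M;\omega)-f_2(N;\omega)]\xi(M';\omega)+f_2(N;\omega)[\xi(M';\omega)-\xi(N';\omega)].\]
By equicontinuity of $\{f_2\}$, whenever $d(M,N)<\delta_1$ the absolute value of the first term is at most $\epsilon|\xi(M';\omega)|$, which integrates to at most $\epsilon\,\bar\xi(M')\le\epsilon C_\xi$. By equicontinuity of $\{\xi\}$, whenever $d(M',N')<\delta_2$ the second term has absolute value at most $\epsilon\,f_2(N;\omega)$, which integrates to $\epsilon\,\bar f_2(N)\le\epsilon C_f$. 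Choosing $\delta=\min(\delta_1,\delta_2)$ bounds the oscillation of $\overline{f_2\xi}$ at any pair $((M,M'),(N,N'))$ within product distance $\delta$ by $\epsilon(C_f+C_\xi)$, giving uniform continuity on $\M^2$ for any standard product metric.

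The main obstacle is the subtle point of confirming that the envelope functions $\bar f_2$ and $\bar\xi$ are globally bounded on $\M$, so that the triangle-inequality argument above actually closes. This is not an extra hypothesis: equicontinuity of $\{\xi(\cdot;\omega)\}$ descends to equicontinuity of $\{|\xi(\cdot;\omega)|\}$ via the reverse triangle inequality $||\xi(M;\omega)|-|\xi(M';\omega)||\le|\xi(M;\omega)-\xi(M';\omega)|$, so the same averaging argument used for $\bar f_1$ gives uniform continuity of $\bar\xi$, and boundedness on the compact CSMS $\M$ follows. Everything else is a routine application of the triangle inequality together with the fact that $\Pb_1,\Pb_2$ are probability measures, so I expect no further technical difficulties.
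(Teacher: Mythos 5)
Your proof is correct and follows essentially the same route as the paper: equicontinuity lets you choose $\delta$ uniformly in $\omega$, and averaging against the probability measure transfers the oscillation bound to $\bar f_1$ and $\overline{f_2\xi}$. The paper simply asserts that the same argument ``can be extended'' to the product term, whereas you carry out the product decomposition and justify boundedness of the envelopes $\bar f_2$ and $\bar\xi$ explicitly --- a detail the paper leaves implicit and which is genuinely needed, since equicontinuity of the families alone does not give the uniform-in-$\omega$ control required to push the product's oscillation through the integral.
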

\begin{proof}
         Since $\M$ is compact $f_1(\cdot;\omega)$ is uniformly continuous for each $\omega\in \I_i$. Moreover, $\M^2$ is compact so $f_2(M;\omega)\xi(M';\omega)$ is also uniformly continuous for each $\omega\in \I_2$. Hence for every $\epsilon>0$, $\omega \in \I_1$ and every $M \in \M$, there exists a $\delta>0$ so that if $d(M,M')<\delta$ then $| f_1(M,\omega)- f_1(M',\omega)|<\epsilon$. Clearly
         \begin{align*}
         \left|\bar f_1(M)-\bar f_1(M') \right| \leq \int_{\I_1}\left| f_1(M;\omega)-f_1(M';\omega)\right|d\Pb_1(\omega)<\epsilon
         \end{align*}
         since $\Pb$ is a probability measure. This exact argument can be extended to $\overline{f_2\xi}(M,M')$.
\end{proof}

     We first construct the sets $\left\{ A_i\right\}_{1 \leq i \leq K}$ that cover $\M$ and satisfy properties P1-P4. For any $\epsilon>0$, cover $\M$ by the finitely many balls $\mathbb{B}(M_i,\delta)$ such that for any $M,M' \in \mathbb{B}(M_i,\delta)$ and $m,m' \in \mathbb{B}(M_j,\delta)$ we have 
     \[|\bar f_1(M)-\bar f_1(M')|<\epsilon, \quad |\overline{f_2\xi}(M,m)-\overline{f_2\xi}(M',m')|<\epsilon, \quad |g(t,M)-g(t,M')|<\epsilon \quad \forall t.\]
     We now construct
     \[A_i=\left(\mathbb{B}(M_i,\delta) \setminus \bigcap_{j=1}^{i-1}\mathbb{B}(M_j,\delta)\right)\]
     and discard the sets that have measure $0$. Since we assume $\M$ has finite measure, then so too does every $A_i\subset \M$. Clearly, properties P1-P4 are satisfied by these sets.
     
     Now we unpack definitions and notation to find 
    \begin{multline*}
    I=\int_0^T\int_{\mathcal M}\Bigg|\bar f_1(M)\Lambda+ \sum_{\ell:t_\ell<t}g(t-t_\ell;M_\ell)\overline{f_2\xi}(M,M_\ell)-\\ \sum_{i=1}^K \chi_{A_i}(M)\left(\lambda_{0i}+\sum_{j=1}^K\sum_{t_{\ell,j}:t_{\ell,j}<t}\alpha_{ij}g_{ij}(t-t_{\ell,j})\right)\Bigg|d\mu(M)dt.
    \end{multline*}

 By construction $\bigcup_{i=1}^K A_i$ covers $\mathcal M$ and $A_m \cap A_n = \emptyset$ when $m \neq n$ therefore 
 \begin{multline*}
 I=\int_0^T\sum_{i=1}^K \int_{A_i} \Bigg|\bar f_1(M)\Lambda+\sum_{\ell:t_\ell<t}g(t-t_\ell;M_\ell)\overline{f_2\xi}(M,M_\ell)-\\ \left(\lambda_{0i}+\sum_{j=1}^K\sum_{t_{\ell,j}:t_{\ell,j}<t}\alpha_{ij}g_{ij}(t-t_{\ell,j})\right)\Bigg|d\mu(M)dt
 \end{multline*}
 
 \begin{multline*}
 \leq \int_0^T\sum_{i=1}^K \int_{A_i} \left|\bar f_1(M)\Lambda-\lambda_{0i}\right|d\mu(M)dt+\\ \int_0^T\sum_{i=1}^K\int_{A_i}\left|\sum_{\ell:t_\ell<t}g(t-t_\ell;M_\ell)\overline{f_2\xi}(M,M_\ell)- \sum_{j=1}^K\sum_{t_{\ell,j}:t_{\ell,j}<t}\alpha_{ij}g_{ij}(t-t_{\ell,j})\right|d\mu(M)dt.
 \end{multline*}
Denote 
\[\tilde f_{i,1}=\frac{1}{\mu(A_i)}\int_{A_i}\bar f_1(M)d\mu(M),\quad \widetilde{f_2\xi}_{ij}=\frac{1}{\mu(A_i\times A_j)}\int_{A_i\times A_j}\overline{f_2\xi}(M,M')d\mu(M\times M').\]
Define $\lambda_{0i}=\Lambda \tilde f_{i,1}$, $\beta_{ij}\in \beta(A_j)=:\{\beta(M) \in \R:M\in A_j\}$, and pick $\alpha_{ij}=\widetilde{f_2\xi}_{ij}$. Re-index the middle sum to $\sum_{j=1}^K\sum_{t_{\ell,j}:t_{\ell,j}<t}$ which is summing over each event once by type, as opposed to their arrival time. Then 
\begin{multline*}
    I \leq \int_0^T\sum_{i=1}^K \int_{A_i} \Lambda \left|\bar f_1(M)-\tilde f_{i,1}\right|d\mu(M)dt+\\ \int_0^T\sum_{i=1}^K\int_{A_i}\left|\sum_{j=1}^K\sum_{t_{\ell,j}:t_{\ell,j}<t}g(t-t_{\ell,j};M_{\ell,j}) \overline{f_2\xi}(M,M_{\ell,j})- \widetilde{f_2\xi}_{ij}g_{ij}(t-t_{\ell,j}) \right|d\mu(M)dt.
\end{multline*}

\begin{multline}\label{eq:L1Proof2}
   \implies I \leq \int_0^T\sum_{i=1}^K \int_{A_i} \Lambda \left|\bar f_1(M)-\tilde f_{i,1}\right|d\mu(M)dt+\\ \hspace{1.7cm}\int_0^T\sum_{i=1}^K\int_{A_i}\sum_{j=1}^K\sum_{t_{\ell,j}:t_{\ell,j}<t}\overline{f_2\xi}(M,M_{\ell,j})\left|g(t-t_{\ell,j};M_{\ell,j})-g_j(t-t_{\ell,j})\right|d\mu(M)dt+\\
   \int_0^T\sum_{i=1}^K\int_{A_i}\sum_{j=1}^K\sum_{t_{\ell,j}:t_{\ell,j}<t}g_{ij}(t-t_{\ell,j})\left|\overline{f_2\xi}(M,M_{\ell,j})- \widetilde{f_2\xi}_{ij}\right|d\mu(M)dt\\=\vcentcolon\text{LHS}+\text{MT}+\text{RHS}. 
\end{multline}
Consider the first term in Equation \eqref{eq:L1Proof2} (LHS). We change the order of integration and integrate over $t$ first. Since $\bar f_1$ is uniformly continuous $\left|\bar f_1(M)-\tilde f_{i,1}\right|<\epsilon$. Hence, $\text{LHS}<\mu(\M)\Lambda T \epsilon$. 

Before we bound the next term (MT), we define $\bm{f\xi} \vcentcolon=\sup_{(M,M') \in \M^2}\overline{f_2\xi}(M,M_{\ell,j})<\infty$ since $\overline{f_2\xi}(M,M_{\ell,j})$ is continuous and $\M^2$ is compact. Then using the continuity of $g(t;M)$, with respect to $M$,
\begin{align*}
\text{MT} \leq& \int_0^T\sum_{i=1}^K\int_{A_i}\sum_{j=1}^K\sum_{t_{\ell,j}:t_{\ell,j}<t}\bm{f\xi}\left|g(t-t_{\ell,j};M_{\ell,j})-g_j(t-t_{\ell,j})\right|d\mu(M)dt\\
<& \int_0^T\sum_{i=1}^K\int_{A_i}\sum_{j=1}^K\sum_{t_{\ell,j}:t_{\ell,j}<t}\bm{f\xi}\epsilon d\mu(M)dt\\
<&\bm{f\xi}\epsilon N(T)\mu(\M).
\end{align*}
To bound the final term, we first use continuity to find that $\left|\overline{f_2\xi}(M,M_{\ell,j})- \widetilde{f_2\xi}_{ij}\right|<\epsilon.$ We also use that for every $i,j \in \{1,2,\dots K\}$, $g_{ij}(t)$ is a density with respect to $t$. Therefore, 

 \begin{align*}     RHS\leq& \int_0^T\sum_{i=1}^K\int_{A_i}\sum_{j=1}^K\sum_{t_{\ell,j}:t_{\ell,j}<t}g_{ij}(t-t_{\ell,j})\epsilon d\mu(M)dt\\
 \leq& \sum_{i=1}^K\int_{A_i}N(T)\epsilon d\mu(M)=\mu(\M)N(T)\epsilon.
 \end{align*}
Hence, for every $\varepsilon>0$ we can find a $K\in \N$, sets $\left\{A_i\right\}_{1\leq i \leq K}$ and parameter vector $\tilde \theta_K$ such that
\[\norm{\lambda_{\text{HP}}(t,M|\HH_t)-\lambda_{\tilde \theta}(t,M|\HH_t)}_{L^1([0,T]\times \mathcal M)}< \varepsilon.\]
By carefully selecting the parameter vector, we can remove the implicit dependence of $K$ on $T$. This is so that can $K$ can be selected to be arbitrarily large for any fixed $T$ so that Equation \eqref{eq:HPApprCompactForm} holds for every finite positive $T$ (i.e. $I \to 0$ as $K \to \infty$ for each fixed $T>0$). 

    \end{proof}

\section{Proof of Theorem \ref{thm:Stationarity}}\label{sec:proofstationarity}
\begin{proof}
Because the targeted non-separable Hawkes process is stationary, the expected number of offspring from a non-immigrant event is less than $1$ \cite{zhuang2013stability}. Specifically, $I\vcentcolon=\E[\xi(M)|M\text{ has a parent}]<1$ which can be expressed as
\[I=\int_\M \left(\int_{\I_2} f_2(M;\omega) d\Pb_2(\omega)\right) \xi(M)d\mu(M)=1-\sigma<1\]  for $\sigma \in (0,1)$. Since the assumptions of Theorem \ref{prop:SeparableHPL1} hold we use results from its proof, in the special case that $\xi$ is independent of $\omega$. In which case, for any $\epsilon>0$ there exists a $K \in \N$ and set of sets $\{A_i\}_{1\leq i \leq K}$ satisfying P1-P4 such that for any $M,M' \in A_i$,
     \[\quad |\bar f_2(M)-\bar f_2(M')|<\epsilon, \quad |\xi(M)-\xi(M')|<\epsilon\]
     where $\bar f_2(M)=\int_{\I_2}f_2(M;\omega)d\Pb_2(\omega)$, for every $i \in \{1,2,\dots K\}$. Then by picking $\xi_i\in \{\xi(M)\in \R^+:M\in A_i\}$ and $\tilde f_{i,2}=\int_\M \bar f_2(M)d\mu(M)$
     we examine the quantity
\begin{align*}
    &\left|\int_\M \bar f_2(M)\xi(M)d\mu(M)-\sum_{i=1}^K \tilde f_{i,2} \xi_i\mu(A_i) \right|\leq \sum_{i=1}^K \int_{A_i}\left|\bar f_2(M)\xi(M)-\tilde f_{i,2}\xi_i\right|d\mu(M)\\
    &\leq  \sum_{i=1}^K \int_{A_i}\bar f_2(M)\left|\xi(M)-\xi_i\right|d\mu(M)+\sum_{i=1}^K \int_{A_i} \xi_i\left|\bar f_2(M)-\tilde f_{i,2}\right|d\mu(M)\\
    &< \epsilon(1+\bar \xi \mu(\M))=\vcentcolon\varepsilon 
\end{align*}
where $\bar \xi\vcentcolon=\sup_{M \in \M}\xi(M)<\infty$ since $\M$ is compact and $\xi$ is continuous. Thus, for every $\varepsilon>0$ there exists a $K \in \N$ such that
\begin{equation}\label{eq:SimpFuncDiag}
   J\vcentcolon= \sum_{i=1}^K \tilde f_{i,2} \xi_i \mu(A_i) <I+\varepsilon=1-\sigma +\varepsilon.
\end{equation}
In particular if $\varepsilon<\sigma$ then $J<1$. 

It is clear that in the case $\xi$ is independent of $\omega$ that the parameter vector $\tilde \theta _K$ has entries of the form $\lambda_{0i}=\Lambda \tilde f_{i,1}$, $\beta_{ij}=\beta_j$, $\alpha_{ij}=\tilde f_{i,2}\xi_j$. Before, presenting the stationarity condition for our multivariate representation we demonstrate another lemma.

\begin{lemma}\label{lem:StatDensity}
    For every $K \in \N$ the mark density of the offspring events induced by $\lambda_{\tilde \theta_K}(t,M|\HH_t)$ is 
    \[f(M)=\sum_{i=1}^K \tilde f_{i,2}\chi_{A_i}(M).\]
\end{lemma}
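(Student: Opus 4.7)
The plan is to isolate the offspring (triggered) part of $\lambda_{\tilde\theta_K}$, substitute the explicit parameter choices, and then apply Proposition \ref{prop:GandMDVJ} to read off the induced mark density. The key algebraic observation is that the parameter choices make the $M$-dependence of the triggered intensity factorize away from its time dependence, and that the weights $\tilde f_{i,2}\mu(A_i)$ sum to $1$.

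Concretely, the triggered part of the intensity is
\[
\lambda^{\text{off}}_{\tilde\theta_K}(t,M|\HH_t)=\sum_{i=1}^K\chi_{A_i}(M)\sum_{j=1}^K\sum_{t_{\ell,j}<t}\alpha_{ij}g_{ij}(t-t_{\ell,j};\beta_{ij}).
\]
Substituting $\alpha_{ij}=\tilde f_{i,2}\xi_j$ and $\beta_{ij}=\beta_j$ (so $g_{ij}=g_j$) decouples the index $i$ from $j$, and I can pull a common factor out of the double sum to obtain
\[
\lambda^{\text{off}}_{\tilde\theta_K}(t,M|\HH_t)=\left(\sum_{i=1}^K\tilde f_{i,2}\chi_{A_i}(M)\right)\underbrace{\sum_{j=1}^K\xi_j\sum_{t_{\ell,j}<t}g_j(t-t_{\ell,j})}_{=:S(t)}.
\]
This is the critical factorization: the mark dependence lives entirely in the simple function $M\mapsto\sum_i\tilde f_{i,2}\chi_{A_i}(M)$, while all history-dependence sits inside $S(t)$.

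Next I would apply Equation \eqref{eq:groundProcess} of Proposition \ref{prop:GandMDVJ} to the offspring intensity, integrating $M$ over $\M$ to obtain the offspring ground intensity
\[
\lambda^{\text{off}}_g(t|\HH_t)=S(t)\sum_{i=1}^K\tilde f_{i,2}\mu(A_i),
\]
and then apply Equation \eqref{eq:MarkDensity} to form the ratio. The $S(t)$ factor cancels, leaving a mark density that is independent of the history. The last step is to verify that the denominator equals $1$: recalling $\tilde f_{i,2}=\mu(A_i)^{-1}\int_{A_i}\bar f_2(M)\,d\mu(M)$ (the obvious analogue of the definition of $\tilde f_{i,1}$ in the proof of Theorem \ref{prop:SeparableHPL1}), I get
\[
\sum_{i=1}^K\tilde f_{i,2}\mu(A_i)=\sum_{i=1}^K\int_{A_i}\bar f_2(M)\,d\mu(M)=\int_{\M}\bar f_2(M)\,d\mu(M)=1,
\]
where the last equality uses that $\bar f_2$ is an average (over $\Pb_2$) of densities on $\M$ and hence itself a density by Tonelli. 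Combining these pieces yields the claimed expression for $f$.

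There is essentially no difficult step: the entire argument is a factorization together with one normalization check. The only thing to be careful about is the definition of $\tilde f_{i,2}$, which should be read as the cell-average of $\bar f_2$ on $A_i$ (consistent with how $\tilde f_{i,1}$ was defined earlier), so that the normalization identity above goes through; with that convention the lemma follows by direct computation.
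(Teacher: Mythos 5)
Your proposal is correct and follows essentially the same route as the paper: substitute $\alpha_{ij}=\tilde f_{i,2}\xi_j$ and $\beta_{ij}=\beta_j$, factor the mark dependence out of the history-dependent double sum, and normalize using $\sum_{i=1}^K\tilde f_{i,2}\mu(A_i)=1$. The only cosmetic difference is that the paper justifies taking the normalized offspring component as the offspring mark density via an explicit Bayes'-theorem computation, whereas you apply Proposition \ref{prop:GandMDVJ} directly to the triggered part of the intensity; the resulting algebra is identical, and your reading of $\tilde f_{i,2}$ as the cell average of $\bar f_2$ on $A_i$ is the intended one.
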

\begin{proof}
    For any Borel set $A \in \M$, by Bayes' Theorem

\begin{align*}
    &\Pb[m \in A| m \text{ is an offspring}]=\frac{\Pb[m \text{ is an offspring}|m \in A]\Pb[m\in A]}{\Pb[m \text{ is an offspring}]}
    \\=&\frac{\int_A \sum_{i,j=1}^K \chi_{A_i}(M)\sum_{t_{\ell,j}:t_{\ell,j}<t}\alpha_{ij}g_{ij}(t-t_{\ell,j}) d\mu(M)\frac{\int_A\lambda_{\tilde \theta_K}(t,M|\HH_t)d\mu(M)}{\int_\M \lambda_{\tilde \theta_K}(t,M|\HH_t)d\mu(M)}}{\int_A\lambda_{\tilde \theta_K}(t,M|\HH_t)d\mu(M)}\times\\
    &\left(\frac{\int_\M \lambda_{\tilde \theta_K}(t,M|\HH_t)d\mu(M)}{\int_\M \sum_{i,j=1}^K \chi_{A_i}(M)\sum_{t_{\ell,j}:t_{\ell,j}<t}\alpha_{ij}g_{ij}(t-t_{\ell,j}) d\mu(M)} \right)\\
    =&\frac{\int_A \sum_{i,j=1}^K \chi_{A_i}(M)\sum_{t_{\ell,j}:t_{\ell,j}<t}\alpha_{ij}g_{ij}(t-t_{\ell,j}) d\mu(M)}{\int_\M \sum_{i,j=1}^K \chi_{A_i}(M)\sum_{t_{\ell,j}:t_{\ell,j}<t}\alpha_{ij}g_{ij}(t-t_{\ell,j}) d\mu(M)}.
\end{align*}
Clearly, the density $f$ with respect to the measure $\mu$ is then
\[f(M)=\frac{ \sum_{i,j=1}^K \chi_{A_i}(M)\sum_{t_{\ell,j}:t_{\ell,j}<t}\alpha_{ij}g_{ij}(t-t_{\ell,j})}{\int_\M \sum_{i,j=1}^K \chi_{A_i}(M)\sum_{t_{\ell,j}:t_{\ell,j}<t}\alpha_{ij}g_{ij}(t-t_{\ell,j}) d\mu(M)}.\]
Now using the parameter values $\tilde \theta_K$
    
    \begin{align*}
     f(M)=&\frac{\sum_{i=1}^K \chi_{A_i}\tilde f_{i,2}(M)\sum_{j=1}^K\sum_{t_{\ell,j}:t_{\ell,j}<t}\xi_j  g_j(t-t_{\ell,j})}{\sum_{i=1}^K \mu(A_i)\tilde f_{i,2}\sum_{j=1}^K\sum_{t_{\ell,j}:t_{\ell,j}<t}\xi_j g_j(t-t_{\ell,j})}=\sum_{i=1}^K \tilde f_{i,2}\chi_{A_i}(M)
    \end{align*}
since $\sum_{i=1}^K \mu(A_i)\tilde f_{i,2}=1$.
\end{proof}

We now demonstrate that the expected cluster size is finite, hence showing stationarity for the multivariate representation. 

The expected number of first generation offspring of an event of type $j$, labelled $E_j$, is the integral over $[0,T]\times \M$ of its contribution to the intensity. Therefore,
\[E_j=\int_0^T\int_\M \sum_{i=1}^K\chi_{A_i}(M)\alpha_{ij}g(t)d\mu(M)dt=\sum_{i=1}^K \mu(A_i)\alpha_{ij}.\]
Alternatively, the expected number of first generation offspring of an event of size $M$ is
\[E(M)=\sum_{j=1}^K \chi_{A_j}(M)E_j=\sum_{j=1}^K\chi_{A_j}(M)\sum_{i=1}^K \mu(A_i)\alpha_{ij}.\]
The expected number of first generation offspring from a non-immigrant event is then
\begin{align*}
E_C=&\int_\M E(M)f(M)d\mu(M)=\int_\M \sum_{j=1}^K\chi_{A_j}(M)\left(\sum_{i=1}^K \mu(A_i)\alpha_{ij}\right)f(M) d\mu(M)\\
=&\int_\M \sum_{j=1}^K\chi_{A_j}(M)\sum_{i=1}^K \mu(A_i)\alpha_{ij} \tilde f_{j,2} d\mu(M)=\sum_{j=1}^K \mu(A_j)\sum_{i=1}^K \mu(A_i)\tilde f_{i,2} \xi_j   \tilde f_{j,2} \\
=& \sum_{j=1}^K \mu(A_j)\xi_j   \tilde f_{j,2}\sum_{i=1}^K \mu(A_i) \tilde f_{i,2} =\sum_{j=1}^K \mu(A_j)\xi_j   \tilde f_{j,2}=J<1-\sigma +\varepsilon.
\end{align*}
Hence we pick $K$ large enough so that $J<1$, which implies $E_C<1$. Therefore, the mean number of first generation offspring of a non-immigrant event is less than one. It follows that the geometric series equal to the mean number of total offspring from a non-immigrant event, $\sum_{r=0}^\infty E_C^r$, is finite. Immigrant events produce a finite number of offspring events a.s. because
\[\int_\M \bar f_1(M) \xi(M)d\mu(M)<\mu(\M)\sup_{M\in \M }\bar f_1(M)\xi(M)<\infty\]
by the continuity of the integrand as well as the compactness and finite measure of $\M$. Then immigrant events a.s. produce only finitely many first generation offspring, each of which only produce finitely many offspring in total. Therefore, a.s. every cluster is off a finite size. Since branching processes are stationary if their mean cluster size is finite (e.g. Exercise 6.3.5 of \cite{daley2003introduction}
) this concludes the proof. 

\end{proof}

\section{Proof of Theorem \ref{thm:TheoremIdentifiability}}\label{Sec:IdentifiabilityProof}

\begin{proof}
    
Since conditional intensities uniquely define the finite dimensional distributions of the point process \cite{daley2003introduction}, to show identifiability we want to demonstrate that for almost every $(t,M) \in [0,T]\times \mathcal M$ that $\lambda_\theta(t,M|\HH_t)=\lambda_{\theta'}(t,M|\HH_t)  \iff \theta=\theta'$. First of all, it is clear that if $\theta=\theta'$ then $\lambda_\theta(t,M|\HH_t)=\lambda_{\theta'}(t,M|\HH_t)$.

We now assume that $\lambda_\theta(t,M|\HH_t)=\lambda_{\theta'}(t,M|\HH_t)$ almost everywhere (a.e.) on $[0,T] \times \mathcal{M}$. First consider $(t,M) \in [0,t_1]\times A_i$. Since we assume $t_1>0$ we find 
\[\lambda_\theta(t,M|\HH_t)=\lambda_{\theta'}(t,M|\HH_t) \implies \lambda_{0i}=\lambda_{0i}'\]
since the history is empty. $i$ was arbitrary, hence $\lambda_{0i}$ is identifiable for every $i \in \{1,2,\dots K\}$. 

Consider the first event $t_1$ and, by relabelling to ease notation, suppose that it is of type $1$, i.e. $M_1 \in A_{1}$. Then for $(t,M) \in (t_1,t_2]\times A_{i}$ we have 
\begin{align*}\lambda_{\theta}(t,M|\HH_t)=\lambda_{0i}+\alpha_{i1}g_{i1}(t-t_1),\quad \lambda_{\theta'}(t,M|\HH_t)=\lambda_{0i}+\alpha'_{i1}g'_{i1}(t-t_1)
\end{align*}
\begin{align} \lambda_\theta(t,M|\HH_t)=&\lambda_{\theta'}(t,M|\HH_t) \implies\alpha_{i1}g_{i1}(t-t_1)=\alpha'_{i1}g'_{i1}(t-t_1)\\
\iff & \alpha_{i1}g_{i1}(t-t_1)-\alpha'_{i1}g'_{i1}(t-t_1)=0  \mbox{ a.e. in }  (t_1,t_2]\times A_{i}.\label{eq:LinIndep}
\end{align}
 By linear independence Equation \eqref{eq:LinIndep} can only be true if $g_{i1}(t-t_1) = g'_{i1}(t-t_1) \iff \beta_{i1} = \beta'_{i1}$ or $\alpha_{i1}=\alpha'_{i1}=0$.

In the first case, i.e. if $g_{i1}(t-t_1) = g'_{i1}(t-t_1)$ then
\[\alpha_{i1}g_{i1}(t-t_1)=\alpha'_{i1}g'_{i1}(t-t_1) \iff \alpha_{i1}=\alpha'_{i1}\]
since $g_{i1}(t-t_1)>0$ by assumption. 

Otherwise, $\alpha_{i1}=\alpha'_{i1}=0$. This means that $\beta_{i1}$ is unidentifiable, however the probability structure of the process is independent of $\beta_{i1}$ so it is trivially unidentifiable. Hence, for every $i$, when the probability structure of the point process depends on $\alpha_{i1}$ and $\beta_{i1}$ these parameters are identifiable. Hence our inductive base case is true.

Now suppose we have observed events of type $1,2,\dots L$, as in there exists an event $(t_\ell,M_\ell)$ such that $M_\ell \in A_{n}$ for every $n \in \{1,2,\dots L\}$, which we know to be true by assumption A4. 

We inductively assume $\alpha_{in}=\alpha'_{in}$ and $\beta_{in}=\beta'_{in}$ for $L<K$ and every $i \in \{1,2,\dots K\}$. We now show that $\alpha_{i(L+1)}=\alpha'_{i(L+1)}$ and $\beta_{i(L+1)}=\beta'_{i(L+1)}$.

Suppose that $t_m$ is the first event of type $L+1$ and that $\mathcal K_t$ is the index set of every type of event that has occurred prior to time $t$. Consider, $(t,M) \in (t_m,t_{m+1}]\times A_i$ where $\lambda_\theta(t,M|\HH_t)=\lambda_{\theta'}(t,M|\HH_t)$ a.e., then
\begin{equation}\label{eq:InducHyp1}
    \lambda_{0i}+\sum_{k \in \mathcal K_t}\sum_{t_{\ell,k}:t_{\ell,k}<t}\alpha_{ik}g_{ik}(t-t_{\ell,k})=\lambda'_{0i}+\sum_{k \in \mathcal K_t}\sum_{t_{\ell,k}:t_{\ell,k}<t}\alpha'_{ik}g'_{ik}(t-t_{\ell,k}) \mbox{ a.e.}.
\end{equation}
However, we have shown $\lambda_{0i}=\lambda'_{0i}$ and by our inductive hypothesis for every $k \in \mathcal K_{t_m}=\{1,2,\dots L\}$, and $i \in \{1,2,\dots K\}$, $\alpha_{ik}=\alpha'_{ik}$ and $\beta_{ik}=\beta'_{ik}$. By cancelling these common terms from both sides of Equation \eqref{eq:InducHyp1} we find 
\begin{multline}\label{eq:LinIndep2}
\alpha_{i(L+1)}g_{i(L+1)}(t-t_m)=\alpha'_{i(L+1)}g'_{i(L+1)}(t-t_m)\\ \iff \alpha_{i(L+1)}g_{i(L+1)}(t-t_m)-\alpha'_{i(L+1)}g'_{i(L+1)}(t-t_m)=0.\end{multline}
The right hand side of Equation \eqref{eq:LinIndep2} is identical to Equation \eqref{eq:LinIndep} except one must replace $1$ by $L+1$. As in this previous case, we find  $\alpha_{i(L+1)}=\alpha'_{i(L+1)}$ and $\beta_{i(L+1)}=\beta'_{i(L+1)}$ when the probability structure of the process depends on these parameters. Therefore, if the $L^{th}$ case is true then the $(L+1)^{th}$ case is true, i.e. $\alpha_{in}=\alpha'_{in}$ and $\beta_{in}=\beta'_{in}$ for $n \in \mathcal K_{t_{M+1}}=\{1,2,\dots L+1\}$ and every $i \in \{1,2,\dots K\}$.  By the principle of mathematical induction we find that $\lambda_\theta(t,M|\HH_t)=\lambda_{\theta'}(t,M|\HH_t) \mbox{ a.e. } \implies \theta=\theta'$. 

Therefore, $\lambda_\theta(t,M|\HH_t)=\lambda_{\theta'}(t,M|\HH_t) \mbox{ a.e. } \iff \theta=\theta'$. Equivalently, all parameters that the probability structure of the process depends on are identifiable.

\end{proof}

\newpage
\section{Figures of Section \ref{Sec:Simulation}}\label{sec:appendixPlots}
\begin{figure}[!h]
\centering
  \includegraphics[width=1\textwidth]{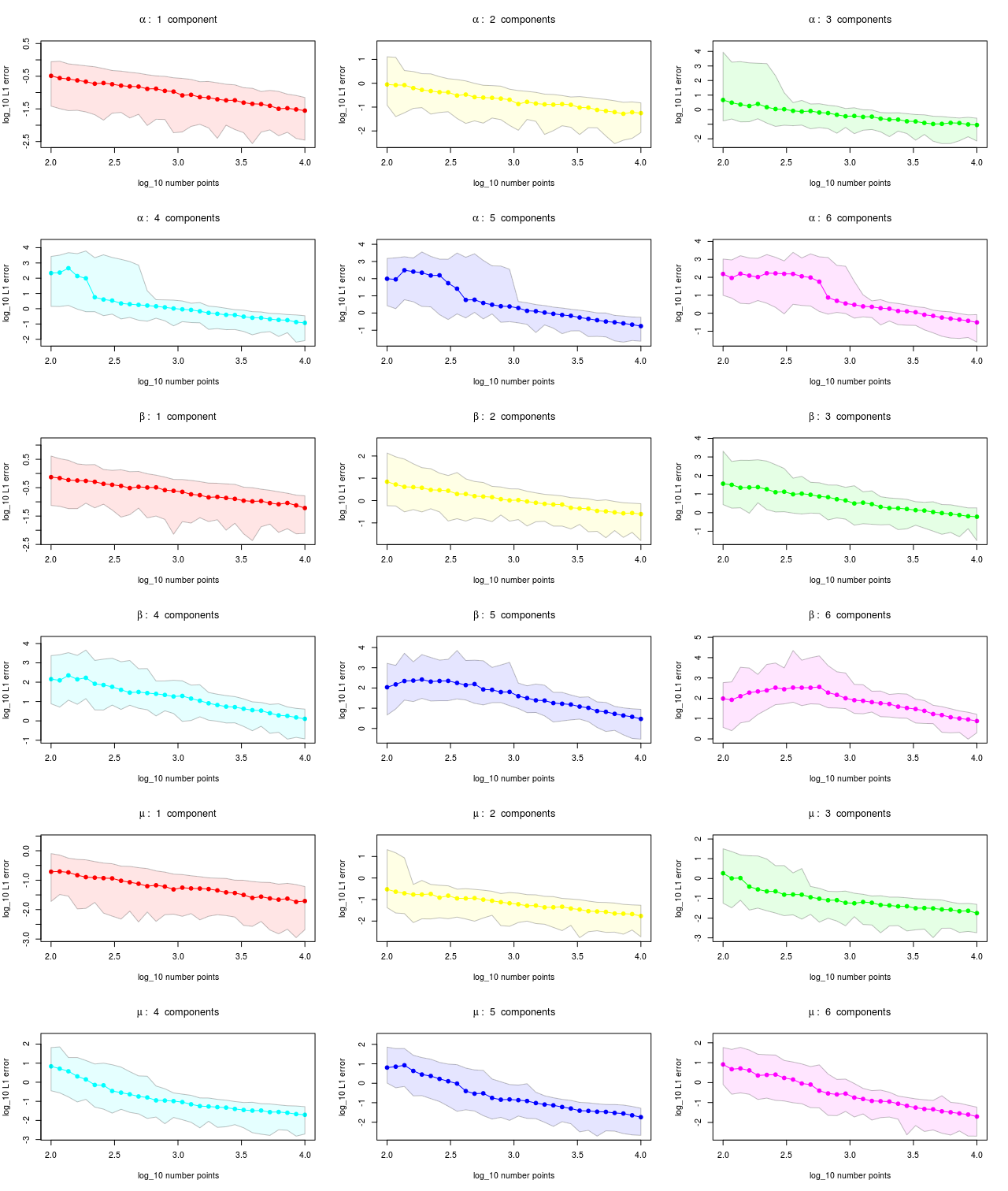}
\caption{Approximate 90\% confidence interval for median $L^1$ error associated with the parameters of the multivariate Hawkes representation of the univariate Hawkes process specified in Equation \ref{eqn:sim_uni}. Median and associated confidence interval based on 128 realisations.}
\label{fig:alphaCI}
\end{figure}
\section{Acknowledgements and Funding}
    The authors wish to acknowledge the use of New Zealand eScience Infrastructure (NeSI) high performance computing facilities, and consulting support as part of this research. New Zealand's national facilities are provided by NeSI and funded jointly by NeSI's collaborator institutions and through the Ministry of Business, Innovation \& Employment's Research Infrastructure programme.  \url{https://www.nesi.org.nz}.

This project was supported by the Royal
Society of New Zealand Marsden Fund (contact MFP-UOO2323).

\end{appendix}

\bibliographystyle{abbrv} 

\bibliography{bibliography.bib}

\begin{thebibliography}{10}

\bibitem{hawkes1973cluster}
L.~Adamopoulos and A.~G. Hawkes.
\newblock Cluster models for earthquakes-regional comparisons.
\newblock {\em Bull. Int. Stat. Inst.}, 45(3):454--461, 1973.

\bibitem{bacry2020sparse}
E.~Bacry, M.~Bompaire, S.~Ga{\"\i}ffas, and J.-F. Muzy.
\newblock Sparse and low-rank multivariate {H}awkes processes.
\newblock {\em Journal of Machine Learning Research}, 21(50):1--32, 2020.

\bibitem{bacry2015hawkes}
E.~Bacry, I.~Mastromatteo, and J.-F. Muzy.
\newblock Hawkes processes in finance.
\newblock {\em Market Microstructure and Liquidity}, 1(01):1550005, 2015.

\bibitem{Bacry2016First}
E.~Bacry and J.-F. Muzy.
\newblock First and second order statistics characterization of {H}awkes processes and non-parametric estimation.
\newblock {\em IEEE Transactions on Information Theory}, 62(4):2184--2202, 2016.

\bibitem{baddeley2015spatial}
A.~Baddeley, E.~Rubak, and R.~Turner.
\newblock {\em Spatial point patterns: {M}ethodology and applications with {R}}.
\newblock CRC press, 2015.

\bibitem{bonnet2023inference}
A.~Bonnet, M.~Martinez~Herrera, and M.~Sangnier.
\newblock Inference of multivariate exponential {H}awkes processes with inhibition and application to neuronal activity.
\newblock {\em Statistics and Computing}, 33(4):91, 2023.

\bibitem{bowsher2007modelling}
C.~G. Bowsher.
\newblock Modelling security market events in continuous time: intensity based, multivariate point process models.
\newblock {\em Journal of Econometrics}, 141(2):876--912, 2007.

\bibitem{bremaud1981point}
P.~Br{\'e}maud.
\newblock Point processes and queues.
\newblock {\em Springer}, 1981.

\bibitem{bremaud1996stability}
P.~Brémaud and L.~Massoulié.
\newblock Stability of nonlinear {H}awkes processes.
\newblock {\em The Annals of Probability}, 24(3):1563--1588, 1996.

\bibitem{cheysson2022spectral}
F.~Cheysson and G.~Lang.
\newblock Spectral estimation of {H}awkes processes from count data.
\newblock {\em The Annals of Statistics}, 50(3):1722--1746, 2022.

\bibitem{chiang2022hawkes}
W.-H. Chiang, X.~Liu, and G.~Mohler.
\newblock Hawkes process modeling of {COVID}-19 with mobility leading indicators and spatial covariates.
\newblock {\em International journal of forecasting}, 38(2):505--520, 2022.

\bibitem{daley2003introduction}
D.~J. Daley and D.~Vere-Jones.
\newblock {\em An introduction to the theory of point processes: volume {I}: elementary theory and methods}.
\newblock Springer, 2003.

\bibitem{daley2008introduction}
D.~J. Daley and D.~Vere-Jones.
\newblock {\em An introduction to the theory of point processes: volume II: general theory and structure}.
\newblock Springer, 2008.

\bibitem{davis2024multidimensional}
L.~Davis, B.~Baeumer, and T.~Wang.
\newblock A multidimensional fractional {H}awkes process for multiple earthquake mainshock aftershock sequences.
\newblock {\em arXiv preprint arXiv:2404.01478}, 2024.

\bibitem{embrechts2018hawkes}
P.~Embrechts and M.~Kirchner.
\newblock Hawkes graphs.
\newblock {\em Theory of Probability \& Its Applications}, 62(1):132--156, 2018.

\bibitem{embrechts2011multivariate}
P.~Embrechts, T.~Liniger, and L.~Lin.
\newblock Multivariate {H}awkes processes: an application to financial data.
\newblock {\em Journal of Applied Probability}, 48(A):367--378, 2011.

\bibitem{fotheringham1991modifiable}
A.~S. Fotheringham and D.~W. Wong.
\newblock The modifiable areal unit problem in multivariate statistical analysis.
\newblock {\em Environment and Planning A}, 23(7):1025--1044, 1991.

\bibitem{Hawkes1971}
A.~G. Hawkes.
\newblock Spectra of some self-exciting and mutually exciting point processes.
\newblock {\em Biometrika}, 58(1):83--90, 1971.

\bibitem{izenman1991review}
A.~J. Izenman.
\newblock Review papers: Recent developments in nonparametric density estimation.
\newblock {\em Journal of the American Statistical Association}, 86(413):205--224, 1991.

\bibitem{jacod1975multivariate}
J.~Jacod.
\newblock Multivariate point processes: predictable projection, {R}adon-{N}ikodym derivatives, representation of martingales.
\newblock {\em Zeitschrift f{\"u}r Wahrscheinlichkeitstheorie und Verwandte Gebiete}, 31(3):235--253, 1975.

\bibitem{Jun2024}
M.~Jun and S.~Cook.
\newblock {Flexible multivariate spatiotemporal {H}awkes process models of terrorism}.
\newblock {\em The Annals of Applied Statistics}, 18(2):1378 -- 1403, 2024.

\bibitem{kresin2023parametric}
C.~Kresin and F.~Schoenberg.
\newblock Parametric estimation of spatial--temporal point processes using the {S}toyan--{G}rabarnik statistic.
\newblock {\em Annals of the Institute of Statistical Mathematics}, 75(6):887--909, 2023.

\bibitem{kresin2022comparison}
C.~Kresin, F.~P. Schoenberg, and G.~Mohler.
\newblock Comparison of {H}awkes and {SEIR} models for the spread of covid-19.
\newblock {\em Advances and Applications in Statistics}, 74:83--106, 2022.

\bibitem{liniger2009multivariate}
T.~Liniger.
\newblock {\em Multivariate {H}awkes processes}.
\newblock PhD thesis, ETH Zurich, 2009.

\bibitem{Mann1943Stochastic}
H.~B. Mann and A.~Wald.
\newblock {On stochastic limit and order relationships}.
\newblock {\em The Annals of Mathematical Statistics}, 14(3):217 -- 226, 1943.

\bibitem{Rambaldi2017}
E.~B. Marcello~Rambaldi and F.~Lillo.
\newblock The role of volume in order book dynamics: a multivariate {H}awkes process analysis.
\newblock {\em Quantitative Finance}, 17(7):999--1020, 2017.

\bibitem{mei2017neural}
H.~Mei and J.~M. Eisner.
\newblock The neural {H}awkes process: a neurally self-modulating multivariate point process.
\newblock {\em Advances in Neural Information Processing Systems}, 30, 2017.

\bibitem{mohler2011self}
G.~O. Mohler, M.~B. Short, P.~J. Brantingham, F.~P. Schoenberg, and G.~E. Tita.
\newblock Self-exciting point process modeling of crime.
\newblock {\em Journal of the American Statistical Association}, 106(493):100--108, 2011.

\bibitem{ogata1978asymptotic}
Y.~Ogata.
\newblock The asymptotic behaviour of maximum likelihood estimators for stationary point processes.
\newblock {\em Annals of the Institute of Statistical Mathematics}, 30(1):243--261, 1978.

\bibitem{ogata1981lewis}
Y.~Ogata.
\newblock On lewis' simulation method for point processes.
\newblock {\em IEEE transactions on information theory}, 27(1):23--31, 1981.

\bibitem{Ogata1988}
Y.~Ogata.
\newblock Statistical models for earthquake occurrences and residual analysis for point processes.
\newblock {\em Journal of the American Statistical Association}, 83(401):9--27, 1988.

\bibitem{Ogata1998}
Y.~Ogata.
\newblock Space-time point-process models for earthquake occurrences.
\newblock {\em Annals of the Institute of Statistical Mathematics}, 50:379--402, 1998.

\bibitem{Ogata1982}
Y.~Ogata and H.~Akaike.
\newblock On linear intensity models for mixed doubly stochastic {P}oisson and self- exciting point processes.
\newblock {\em Journal of the Royal Statistical Society. Series B (Methodological)}, 44(1):102--107, 1982.

\bibitem{Ogata2006}
Y.~Ogata and J.~Zhuang.
\newblock Space–time {ETAS} models and an improved extension.
\newblock {\em Tectonophysics}, 413(1):13--23, 2006.

\bibitem{omi2019fully}
T.~Omi, K.~Aihara, et~al.
\newblock Fully neural network based model for general temporal point processes.
\newblock {\em Advances in Neural Information Processing Systems}, 32, 2019.

\bibitem{ozaki1979maximum}
T.~Ozaki.
\newblock Maximum likelihood estimation of {H}awkes' self-exciting point processes.
\newblock {\em Annals of the Institute of Statistical Mathematics}, 31:145--155, 1979.

\bibitem{reinhart2018review}
A.~Reinhart.
\newblock A review of self-exciting spatio-temporal point processes and their applications.
\newblock {\em Statistical Science}, 33(3):299--318, 2018.

\bibitem{rizoiu2017hawkes}
M.-A. Rizoiu, Y.~Lee, S.~Mishra, and L.~Xie.
\newblock Hawkes processes for events in social media.
\newblock In {\em Frontiers of Multimedia Research}, page 191–218. Association for Computing Machinery and Morgan \& Claypool, 2017.

\bibitem{rizoiu2018sir}
M.-A. Rizoiu, S.~Mishra, Q.~Kong, M.~Carman, and L.~Xie.
\newblock {SIR}-{H}awkes: Linking epidemic models and {H}awkes processes to model diffusions in finite populations.
\newblock In {\em Proceedings of the 2018 world wide web conference}, pages 419--428, 2018.

\bibitem{rizoiu2017expecting}
M.-A. Rizoiu, L.~Xie, S.~Sanner, M.~Cebrian, H.~Yu, and P.~Van~Hentenryck.
\newblock Expecting to be hip: {H}awkes intensity processes for social media popularity.
\newblock In {\em Proceedings of the 26th international conference on world wide web}, pages 735--744, 2017.

\bibitem{schafer2006recurrent}
A.~M. Sch{\"a}fer and H.~G. Zimmermann.
\newblock Recurrent neural networks are universal approximators.
\newblock In {\em Artificial Neural Networks--ICANN 2006: 16th International Conference, Athens, Greece, September 10-14, 2006. Proceedings, Part I 16}, pages 632--640. Springer, 2006.

\bibitem{schoenberg2003multidimensional}
F.~P. Schoenberg.
\newblock Multidimensional residual analysis of point process models for earthquake occurrences.
\newblock {\em J. Amer. Stat. Assoc.}, 98(464):789--795, 2003.

\bibitem{schoenberg2004testing}
F.~P. Schoenberg.
\newblock Testing separability in spatial-temporal marked point processes.
\newblock {\em Biometrics}, pages 471--481, 2004.

\bibitem{paik2022nonparametric}
F.~P. Schoenberg.
\newblock Nonparametric estimation of variable productivity {H}awkes processes.
\newblock {\em Environmetrics}, 33(6):e2747, 2022.

\bibitem{schoenberg2019recursive}
F.~P. Schoenberg, M.~Hoffmann, and R.~J. Harrigan.
\newblock A recursive point process model for infectious diseases.
\newblock {\em Annals of the Institute of Statistical Mathematics}, 71:1271--1287, 2019.

\bibitem{shchur2021neural}
O.~Shchur, A.~C. T{\"u}rkmen, T.~Januschowski, and S.~G{\"u}nnemann.
\newblock Neural temporal point processes: A review.
\newblock {\em arXiv preprint arXiv:2104.03528}, 2021.

\bibitem{veen2008estimation}
A.~Veen and F.~P. Schoenberg.
\newblock Estimation of space--time branching process models in seismology using an {EM}--type algorithm.
\newblock {\em Journal of the American Statistical Association}, 103(482):614--624, 2008.

\bibitem{vere1998probabilities}
D.~Vere-Jones.
\newblock Probabilities and information gain for earthquake forecasting.
\newblock {\em Comput. Seismol.}, 30:248--263, 1998.

\bibitem{VEREJONESD1982}
D.~Vere-Jones and T.~Ozaki.
\newblock Some examples of statistical estimation applied to earthquake data.
\newblock {\em Annals of the Institute of Statistical Mathematics}, 34(1):189--207, 1982.

\bibitem{Wang2012}
T.~Wang, M.~Bebbington, and D.~Harte.
\newblock {M}arkov-modulated {H}awkes process with stepwise decay.
\newblock {\em Annals of the Institute of Statistical Mathematics}, 64(3):521--544, 2012.

\bibitem{Wasserman2006}
L.~Wasserman.
\newblock {\em All of Nonparametric Statistics}.
\newblock Springer New York, NY, 2006.

\bibitem{zhuang2011next}
J.~Zhuang.
\newblock Next-day earthquake forecasts for the {J}apan region generated by the {ETAS} model.
\newblock {\em Earth, Planets and Space}, 63:207--216, 2011.

\bibitem{zhuang2020detection}
J.~Zhuang, T.~Wang, and K.~Kiyosugi.
\newblock Detection and replenishment of missing data in marked point processes.
\newblock {\em Statistica Sinica}, 30(4):2105--2130, 2020.

\bibitem{zhuang2013stability}
J.~Zhuang, M.~J. Werner, and D.~S. Harte.
\newblock Stability of earthquake clustering models: {C}riticality and branching ratios.
\newblock {\em Physical Review E—Statistical, Nonlinear, and Soft Matter Physics}, 88(6):062109, 2013.

\bibitem{zuo2020transformer}
S.~Zuo, H.~Jiang, Z.~Li, T.~Zhao, and H.~Zha.
\newblock Transformer {H}awkes process.
\newblock In {\em International Conference on Machine Learning}, pages 11692--11702. PMLR, 2020.

\end{thebibliography}

\end{document}